\definecolor{tucgreen}{RGB}{0,140,79}
\title[Reasoning about Human-Friendly Strategies in Repeated Keyword Auctions]{Reasoning about Human-Friendly Strategies in Repeated Keyword Auctions}
\author{Francesco	Belardinelli}
\affiliation{
  \institution{Université d’Evry}
  \city{Évry}
  \country{France}
}
\email{francesco.belardinelli@univ-evry.fr}
\author{Wojtek Jamroga}
\affiliation{
  \institution{University of Luxembourg}
  \city{Esch-sur-Alzette}
  \country{Luxemburg}
}
\email{wojciech.jamroga@uni.lu}
\author{Vadim Malvone}
\affiliation{
  \institution{Télécom Paris}
  \city{Paris}
  \country{France}
}
\email{vadim.malvone@telecom-paris.fr}
\author{Munyque Mittelmann} 
\affiliation{
  \institution{Université de Toulouse - IRIT}
  \city{Toulouse}
  \country{France}
}
\email{munyque.mittelmann@irit.fr}
\author{Aniello Murano }
\affiliation{
  \institution{University of Naples Federico II}
  \city{Naples}
  \country{Italy}
  }
  \email{nello.murano@gmail.com}
\author{Laurent Perrussel}
\affiliation{
  \institution{Université de Toulouse - IRIT}
  \city{Toulouse}
  \country{France}
  } 
 \email{laurent.perrussel@irit.fr}
\begin{abstract}

In online advertising, search engines sell ad placements for keywords continuously through auctions. This problem can be seen as an infinitely repeated game since the auction is executed whenever a user performs a query with the keyword. As advertisers may frequently change their bids, the game will have a large set of equilibria with potentially complex strategies. In this paper, we propose the use of natural strategies for reasoning in such setting as they are processable by artificial agents with limited memory and/or computational power as well as understandable by human users.  To reach this goal, we introduce a quantitative version of Strategy Logic with natural strategies in the setting of imperfect information. In a first step, we show how to model strategies for repeated keyword auctions and take advantage of the model for proving properties evaluating this game. In a second step, we study the logic in relation to the distinguishing power, expressivity, and model-checking complexity for strategies with and without recall. 
\end{abstract}
\keywords{Mechanism Design; Auctions; Strategic Reasoning}
\begin{document}


\pagestyle{fancy}
\fancyhead{}


\maketitle 

\section{Introduction}
\label{sec:introduction}

In recent years a wealth of logic-based languages have been introduced to reason about the strategic abilities of autonomous agents in multi-agent systems (MAS), including Alternating-time Temporal Logic (\ATL) \cite{alur2002alternating}, Strategy Logic (\SL) \cite{MMPV14,chatterjee2010strategy}, and Game Logic \cite{pauly2003game}, just to name a few. 
In conjunction with model checking techniques  \cite{principlesmodelchecking}, these formal languages have allowed for the development of efficient verification tools \cite{LomuscioQuRaimondi15,GammieMeyden04a,Cermak+14}, which have been successfully applied to the certification of MAS as different as voting protocols \cite{BelardinelliCDJ17,JamrogaKM20}, robot swarms \cite{Dixon+12,KouvarosLomuscio15}, and business processes \cite{Deutsch+09,GonzalezGriesmayerLomuscio15}.

Still, verification tools and techniques are comparatively less developed for {\em data-driven} and {\em data-intensive} systems\footnote{``[Model checking] is mainly appropriate to control-intensive applications and less suited for data-intensive applications" \cite[p.~15]{principlesmodelchecking}}, that is, contexts where the data content of processes, or {\em agents}, is key to model and account for the evolution of the system \cite{BelardinelliLomuscioPatrizi15,MontaliCalvaneseGiacomo14}.
This is the case also for online advertising, where search engines sell ad placements for keywords continuously through auctions. This problem can be seen as an infinitely repeated game since the auction is executed whenever a user performs a query with the keyword. As advertisers may frequently change their bids, the game will have a large set of equilibria with potentially complex strategies, thus making the specification and verification of keyword auctions a complex problem to solve for current model checking methods\footnote{``In principle, the sets of equilibria in such repeated games can be very large, with players potentially punishing each other for deviations. The strategies required to support such equilibria are usually quite complex, however, requiring precise knowledge of the environment and careful implementation. In theory, advertisers could implement such strategies via automated robots, but in practice they may not be able to: bidding software must first be authorized by the search engines, and search engines are unlikely to permit strategies that would allow advertisers to collude and substantially reduce revenues." \cite{edelman2007internet}}.

In this paper, we propose the use of \emph{natural strategies}~\cite{natStrategy,natStrategyII} for reasoning about 
equilibria in keyword auctions.
Strategies in MAS are typically defined as functions from (sequences of) states to actions. 
The authors of \cite{natStrategy,natStrategyII} argued that such ``combinatorial'' strategies might be appropriate to model the strategic abilities of a machine (robot, computer program) with extensive computational power. However, they do not provide good models of behaviour for agents with limited memory and computing capacity, such as humans. 
As a remedy, they proposed to model ``human-friendly'' strategies by lists of condition-action pairs with bounded complexity. This is consistent with the empirical research on {human concept learning}~\cite{Bourne70concepts,Feldman00conceptlearning} and social norms~\cite{Santos18SocialNormComplexity,Santos18phd}, as well as some works on usability~\cite{Nielsen94usability} and psychology of planning~\cite{Morris14psycho-planning}.
Natural strategies have been already used to redefine some security requirements for voting protocols in~\cite{JamrogaKM20}.

In our case, the bidding strategy in an auction should be executable for a \emph{simple} artificial agent, as well as reasonably transparent to the human user, which makes natural strategies a good match. 
Moreover, natural strategies provide a way to define complexity 
(and hence also ``simplicity'') 
metrics for various functionality, security, and usability properties in MAS. By focusing on simple strategies, one can make the 
verification of equilibrium properties decidable, or even tractable, despite the prohibitive complexity of the general problem. This is especially evident for strategies with memory, which normally make the synthesis and model checking problems undecidable~\cite{Dima11undecidable,Vester13ATL-finite}. 
%

\head{Contribution} 
By leveraging on natural strategies, we introduce a quantitative semantics for \SL\ with natural strategies and  imperfect information. As a first contribution, we show how to represent popular strategies for repeated keyword auctions in the proposed framework, as well as prove properties pertaining to this game.
Second, we analyse our novel variant of \SL\ in relation with its distinguishing power, expressivity, and  complexity of the model checking problem, for natural strategies with and without recall.
%
%
%
\subsection{Related work} 
Recently, there have been efforts to apply 
formal methods to the (semi-)automatic 
verification of some 
decision-making mechanisms, including auctions and voting protocols. 
A number of works~\cite{Caminati2015,barthe2016computer,KERBER201626} expresses these mechanisms in high-level specification languages. However, in contrast with standard model checking techniques \cite{principlesmodelchecking}, their verification is not fully automated, but only assisted by a reasoner. 
\citet{THW11} 
introduce a framework for fully-automated verification of voting protocols. Still, their approach can only model one-shot mechanisms and thus does not capture multi-stage protocols and repeated auctions. 
In preliminary works, 
\citet{pauly2003logic} and \citet{wool2007} 
advocate the use of 
\ATL\ \cite{alur2002alternating} to reason about decision-making mechanisms.
As \ATL\ lacks the expressivity to reason about quantitative aspects such as valuations and payments, and solution concepts such as equilibria, \citet{KR2021} introduce \SLKF, a quantitative and epistemic version of \SL\ 
\cite{chatterjee2010strategy,MMPV14}, and show how it can be used for reasoning about notions such as Nash equilibrium 
and strategyproofness. 
Still, their approach considers strategies as functions from states to actions and cannot handle strategies with recall. 
 

A key assumption of the present contribution is that agents have only partial observability of the global state of the system, as it is often the case in real-life applications.
Contexts of imperfect information have been extensively considered in the literature on formal verification~\cite{Dima11undecidable,KV00,JA06,Reif84,BJ14}.
Generally speaking, imperfect information immediately entails higher complexity of game solving.
In multi-player games, the complexity can go up to being non-elementary~\cite{PR89}, or even undecidability when considered in the context of memoryful strategies~\cite{Dima11undecidable}. Hence, it is of interest to analyse imperfect information systems where agents have finite or bounded memory, in order to retrieve a decidable model checking problem.
%
%
Works that are closest in spirit to our contribution concern modeling, specification, and reasoning about strategies of bounded-memory agents.
We directly build on the research by Jamroga, Malvone, and Murano on natural strategies~\cite{natStrategy,natStrategyII}. 
We generalize the approach by considering quantitative semantics for both natural strategies and the logic, which is more suitable for  reasoning about mechanisms with monetary transfer (\eg, auctions). 
We also consider \SL\ instead of \ATL, due to its expressive power. 
In a related vein, \citet{Agotnes09bounded} investigate strategic abilities of agents with bounded memory, while~\citet{BLM18} consider bounded memory as an approximation of perfect recall. On a related direction, temporal and strategic logics have been extended to handle agents with bounded resources \cite{Alechina09bounded,Alechina10atl-bounded,Bulling10rtl,Bulling10bounded}. Issues related to bounded rationality are also investigated in~\cite{BCS08,HO09,GSW14}.

Also relevant for the present contribution are papers that study explicit representations of strategies.
This category is much richer and includes extensions of $\ATLs$ with explicit reasoning about actions and strategies~\cite{Hoek05commitment,Agotnes06action,Walther07explicit,Herzig14explicit-progs}, as well as logics that combine features of temporal and dynamic logic~\cite{Harel82processlogic,Novak09codepatterns}. 
\citet{Duijf16strategies} present a variant of STIT logic, that enables reasoning about strategies and their performance in the object language.
Also, plans in agent-oriented programming are in fact rule-based descriptions of strategies. In particular, reasoning about agent programs using strategic logics was investigated in~\cite{Bordini06verifying,Alechina07aprograms,Alechina08execution,Dastani10programs-aamas,Yadav12atl-like}.




\head{Outline} 
In Sec.~\ref{sec:preliminaries}, we recall basic definitions. 
In Sec.~\ref{sec:natslf}, we define Natural Strategy Logic, 
denoted \NatSLF. 
In Sec.~\ref{sec:keywordauction} we focus on the problem of repeated keyword auctions. 
In Sec.~\ref{sec:expressivity}, we investigate the expressivity and distinguishing power of \NatSLF.
Sec.~\ref{sec:modelchecking} establishes the complexity of model checking, and  Sec.~\ref{sec:conclusion} concludes the paper. 
\section{Preliminaries}
\label{sec:preliminaries}

We first recall basic notions. 
For the remainder of the paper, we fix a set of atomic propositions $\APf$, a 
set of agents $\Ag$ and a set of strategy variables $\Var$. 
We let $ \qtdAg$ be the number of agents in $\Ag$. 
Finally, let $\Func\subseteq \{f\colon [\lowb,1]^m\to [\lowb,1] \mid m \in \setn \}$  be a set of functions over $[\lowb,1]$ of possibly different arities. 
 
\subsection{Weighted Concurrent Game Structures}

The semantics of natural strategies and \NatSLF\  are interpreted over weighted concurrent game structures (\wCGS). 
A difference from classical structures is that the labelling of atomic propositions is replaced by a weight function. We consider weighted propositions for easily handling quantitative aspects (such as prices). 

\begin{definition}
  \label{def-wcgs}
A \emph{weighted concurrent game structure with imperfect information} (\wCGS) is a tuple
$\wCGS=(\Act,\setpos,\setlegal, 
\trans,\val,\setpos_\init,\{\obsrel\}_{\ag\in\Ag})$ where: 
(i)    $\Act$ is a finite set of \emph{actions};
(ii)    $\setpos$ is a finite set of \emph{states}; 
(iii)    $\setlegal: \Ag\times\setpos\to 2^{\Act}$ is a \emph{legality function}, defining the availability of actions; 
(iv)    $\trans$ is a transition function assigning a successor state $\pos' = \trans(\pos, (\act_\ag)_{\ag \in \Ag})$ to each state $\pos \in \setpos$ and any tuple of actions $(\act_\ag)_{\ag \in \Ag}$, where $\act_\ag \in \setlegal(\ag, \pos)$; 
(v)    $\val:\setpos\times\APf\to [\lowb,1]$ is a \emph{weight function};
(vi)   $\setpos_\init\subseteq\setpos$ is a set of \emph{initial states}; 
    and 
(vii)    $\obsrel\;\subseteq \setpos\times\setpos$ is an equivalence relation called the  \emph{observation relation} of agent $\ag$. 
\end{definition} 
We require that the \wCGS is uniform, that is $\pos \obsrel \pos'$ implies $\setlegal(\ag, \pos) = \setlegal(\ag, \pos')$. 
We  write $\profile{o}$ for a tuple of objects $(o_\ag)_{\ag\in\Ag}$, one for each agent, and such tuples are called \emph{profiles}.
Given a profile $\profile{o}$ and $\ag\in\Ag$, we let $o_\ag$ be agent $\ag$'s component, and $o_{-\ag}$ is $(o_\agb)_{\agb \in \Ag\setminus\{\ag\}}$. Similarly, we let $\Ag_{-\ag}=\Ag\setminus\{\ag\}$.

In a state $\pos\in\setpos$, each player $\ag$ chooses an available action $\mova\in \setlegal(\ag,\pos)$, 
and the game proceeds to state
$\trans(\pos, \jmov)$ where $\jmov$ is the \emph{action profile}
$(\mova)_{\ag\in\Ag}$. 
%
A \emph{play} $\iplay=\pos_{0}\pos_1\pos_2\ldots$ is an infinite
sequence of states  
such that for every $i\geq 0$ there exists an action profile $\jmov$
such that $\trans(\pos_{i}, \jmov)=\pos_{i+1}$. 
We write $\iplay_i=\pos_i$ for the state at index $i$ in play $\iplay$. 
A \textit{history} $\history = \pos_{0}\pos_1\pos_2\ldots\pos_n$ is a finite sequence of states. The last element of a history is denoted by $\lasth(\history) = \pos_n$. 
$\sethistory$ denotes the set of all histories in the wCGS $\wCGS$.

\subsection{Natural Strategies}

In this section we recall the notion of uniform natural strategies from \cite{natStrategyII}. 
Natural strategies are conditional plans, represented through an ordered list of condition-action rules \cite{natStrategyII}. The intuition is that the first rule whose condition holds in the history of the game is selected, and the corresponding action is executed. 
As we are considering the setting of imperfect information, the conditions are regular expressions over \textit{weighted epistemic  (WE) formulas}. 
Given an agent $\ag$, the WE formulas over $\Prop$, denoted $\WE(\Prop)$, are conditions on $\ag$'s knowledge and are expressed by the following Backus-Naur Form grammar: 
\[\psi \coloncolonequals \top \mid \Ka \phi \mid f(\psi, \dots, \psi) 
\] 
\[\phi \coloncolonequals p \mid f(\phi, \dots, \phi)  \mid \Kb\phi \]  
where  $f\in \Func$ is a function, $p \in \Prop$ is an atomic proposition and $\agb \in \Ag$ is an agent. 

Given a wCGS $\wCGS$, a state $\pos \in \setpos$ and a $\WE(\Prop)$ formula $\phi$, we inductively define the satisfaction value of $\phi$ in $\pos$, denoted $\semEP{\pos}{\phi}$: 
\begingroup  \allowdisplaybreaks
  \begin{align*}
    \semEP{\pos}{p} &= \val(\pos,p) \\
 \semEP{\pos}{\Ka\phi} &=
    \min_{\pos'\obsrel\pos}\semEP{\pos'}{\phi} \\
    \semEP{\pos}{f(\varphi_{1}, \!\ldots, \varphi_{m})} & =
    f(\semEP{\pos}{\varphi_{1}}, \ldots,
    \semEP{\pos}{\varphi_{m}}) 
  \end{align*}
  \endgroup

The semantics for the knowledge modality is the standard in the literature on 
fuzzy epistemic logic (\eg\   \cite{maruyama2021reasoning}). 
Let $\setregular(\WE(\Prop))$ be the set of regular expressions over the weighted epistemic conditions $\WE(\Prop)$, defined with the constructors $\concat, \ndchoice, \iteration$ representing concatenation, nondeterministic choice, and finite iteration, respectively. 
Given a regular expression $\regular$ and the language $\Language(\regular)$ on words generated by $\regular$, a history $\history$ is \textit{consistent} with $\regular$ iff there exists $b \in \Language(\regular)$ such that $|\history| = |b|$ and $\semEP{\history[i]}{b[i]}=1$, for all $0 \leq i \leq |\history|$.
Intuitively, a history $\history$ is consistent with a regular expression $\regular$ if the $i$-th weighted epistemic condition in $\regular$ ``holds'' in the $i$-th state of $\history$ (for any position $i$ in $\history$). 

A \textit{uniform natural strategy with recall} $\strat_\ag$ for agent $\ag$ is a sequence of pairs $(\regular, \act)$, where $\regular \in \setregular(\WE(\Prop))$ is a regular expression, and $\act$ is an action available in $\last(\history)$, for all histories $\history \in \sethistory$ consistent with $\regular$. The last pair on the sequence is required to be  $(\top\iteration, \act)$, with $\act \in \setlegal(\ag, \pos)$ for every $\pos \in \setpos$ and some $\act \in \Act$.  

A \textit{uniform memoryless natural strategy} is a special case of natural strategy in which each condition is a weighted epistemic formula (i.e., no regular operators are allowed).

Natural strategies are uniform in the sense they specify the same actions in indistinguishable states (see \cite{natStrategyII}).  
We define $\setstrata^{\setting}$ to be the set of uniform 
natural strategies for agent $\ag$ and  $\setstrat^{\setting} = \cup_{\ag \in \Ag} \setstrata^{\setting}$, where $\setting \in \{ir, iR\}$\footnote{As usual in the verification process, we denote imperfect recall with r, perfect recall with R, imperfect information with i, and perfect information with I.}. 
Let $\size(\strat_\ag)$ denote the number of guarded actions in $\strat_\ag$, $\cond_\idx(\strat_\ag)$ be the $\idx$-th guarded condition on $\strat_\ag$, $\cond_\idx(\strat_\ag)[j]$ be the $j$-th WE formula of the guarded condition $\strat_\ag$, and $\action_\idx(\strat_\ag)$ be the corresponding action. Finally, $\match(\history, \strat_\ag)$ is the smallest index $\idx \leq \size(\strat_\ag)$ such that for all $0 \leq j \leq |last(h)|$, $\semEP{\history[j]}{\cond_\idx(\strat_\ag)[j]} = 1$\footnote{Note that, we considered the case in which the condition have the same length of the history. There is also the case in which the condition is shorter than the history. This is due to the usage of the finite iteration operator. In the latter case, we need to check a finite number of times the same weighted epistemic formula in different states of the history. For more details on this aspect see \cite{natStrategy,natStrategyII}.} and $\action_\idx(\strat_\ag) \in \setlegal(\ag, \lasth(\history))$. In other words, $\match(\history, \strat_\ag)$ matches the state $\lasth(\history)$ with the first condition in $\strat_\ag$ that holds in $\history$, and action available in $\lasth(\history)$. 

\halfline
\head{Measurement of Natural Strategies.}
The complexity of the strategy $\strat$ is the total size of its representation and is denoted as follows:  
$compl(\strat) \egdef \sum_{(\regular, \act) \in \strat}|\regular|$, 
where $|\regular|$ is the number of symbols in $\regular$, except by parentheses. If  $\regular$ is a $n$-ary function in $\Func$, then  $|\regular| = n+1$.  


\section{Natural Strategy Logic}
\label{sec:natslf}

\SLF\ \cite{Bouyer19} proposes a quantitative semantics for Strategy Logic, in which strategies are functions mapping  histories to actions. 
For reasoning about intuitive and simple  strategies, we introduce \SLF\ with natural strategies and imperfect information, denoted \NatSLF. 
Throughout this section, let $\setting \in \{ir, iR\}$ denote whether the semantics considers memoryless or recall strategies. 

An \emph{assignment}  $\assign:\Ag\union\Var \to \setstrat^{\setting}$ is
a function from players and variables to strategies.
For an assignment
$\assign$, an agent $\ag$ and a strategy $\strat$ for $\ag$,
$\assign[a\mapsto\strat]$ is the assignment that maps $a$ to $\strat$ and is otherwise equal to
$\assign$, and 
$\assign[\var\mapsto \strat]$ is defined similarly, where $\var$ is a
variable. 
%
For an assignment $\assign$ and a state $\pos$ 
we let
$\out(\assign,\pos)$ be the unique play that starts in  
$\pos$ and follows the strategies
assigned by $\assign$.  
Formally,  $\out(\assign,\pos)$ is the play $\pos_{0}\pos_{1}\ldots$ such that $\pos_0=\pos$ and for all $i\geq 0$, $\pos_{i+1}=\trans(\pos_{i},\jmov)$ where for all $\ag\in\Ag$,  $\jmov_\ag = \action_{\match(\pos_i, \assign(\ag))}(\assign(\ag))$.  

\subsection{\NatSLF\ Syntax}
\begin{definition}
  The syntax of \NatSLF\  is defined as follows: 
\begin{align*}
  	\varphi  ::= p \mid \Estrata{\ag}\varphi \mid (\ag, \var_\ag)  \varphi  
        \mid f(\varphi, \ldots, \varphi)
        \mid \X \phi \mid \phi \Uu \phi
\end{align*}
where $p\in\APf$,  $\var_\ag\in\Var \ \cup \ \setstrat^\setting_\ag$, $\ag\in\Ag$, and $f\in \Func$.
\end{definition}

The intuitive reading of the operators is as follows: $\Estrata{\ag}\varphi$ means that there exists a strategy with complexity less or equal than $k$ for agent $\ag$ such that $\phi$ holds; $(\ag,\var_\ag)\varphi$ means that when strategy $\var_\ag$ is assigned to agent $\ag$, $\phi$ holds; 
$\X$ and $\Uu$ are the usual temporal operators ``next'' and ``until''. The meaning of $f(\varphi_1,\ldots,\varphi_n)$ depends on the function $f$. 
We use $\top$, $\ou$, and $\neg$ to denote, respectively, function~$1$, function $x,y\mapsto \max(x,y)$ and
function $x\mapsto - x$.

A variable is \emph{free} in formula $\phi$ if it is bound to an agent without being quantified upon, and an agent $\ag$ is free in $\phi$ if $\phi$ contains a temporal operator ($\X$ or $\Uu$) that is not in the scope of any binding for $\ag$. The set of free variables and agents in $\phi$ is written $\free(\phi)$, and a formula $\phi$ is a \emph{sentence} if $\free(\phi)=\emptyset$. 
The strategy quantifier $\Estrata{\ag}\phi$ quantifies on strategies \emph{for agent $\ag$}.  

\subsection{\NatSLF\ Semantics}

\begin{definition}
Let $\wCGS=(\Act,\setpos,\trans,\val,\setpos_\init, \{\obsrel\}_{\ag\in\Ag})$ be a \wCGS, and $\assign$ an assignment. The satisfaction value $\semSL{\CGS}{\setting}{\assign}{\pos}{\varphi}\in[\lowb,1]$ of  a \NatSLF\ formula $\phi$ in a
  state~$\pos$
  is defined as follows, 
  where $\iplay$ denotes $\out(\pos,\assign)$:
  \begingroup
  \allowdisplaybreaks
  \begin{align*}
  \semSL{\CGS}{\setting}{\assign}{\pos}{p} &= \val(\pos,p) \\
      \semSL{\CGS}{\setting}{\assign}{\pos}{\Estrata{\ag}\varphi} &=  \max_{\strat \in \{\alpha \in \setstrata^{\setting}        : compl(\alpha) \leq k\} }
    \semSL{\CGS}{\setting}{\assign[\var_\ag \mapsto \strat]}{\pos}{\varphi} \\ 
  \semSL{\CGS}{\setting}{\assign}{\pos}{(\ag,\var_\ag)\varphi} & = \semSL{\CGS}{\setting}{\assign[\ag \mapsto
      \assign(\var_\ag)]}{\pos}{\varphi} \text{ if }\var_\ag \in \Var\\
 \semSL{\CGS}{\setting}{\assign}{\pos}{(\ag,\strat_\ag)\varphi} & = \semSL{\CGS}{\setting}{\assign[\ag \mapsto
      \strat_\ag]}{\pos}{\varphi} \text{ if }\strat_\ag \not \in \Var\\
  \semSL{\CGS}{\setting}{\assign}{\pos}{f(\varphi_{1}, \!\ldots, \varphi_{m})} & =
    f(\semSL{\CGS}{\setting}{\assign}{\pos}{\varphi_{1}}, \ldots,
  \semSL{\CGS}{\setting}{\assign}{\pos}{\varphi_{m}}) \\
  \semSL{\CGS}{\setting}{\assign}{\pos}{\X \phi} &= \semSL{\CGS}{\setting}{\assign}{\iplay_{1}}{\phi} \\
  \semSL{\CGS}{\setting}{\assign}{\pos}{\phi_1\Uu\phi_2} & = \sup_{i \ge 0}     \min
       \Big
      (
    \semSL{\CGS}{\setting}{\assign}{\iplay_{i}}{\phi_2},\! 
      \min_{0\leq j <i} 
    \semSL{\CGS}{\setting}{\assign}{\iplay_{j}}{\phi_1}   \Big)
  \end{align*}
  \endgroup
\end{definition}

If $\phi$ is a sentence, its satisfaction value does not depend on the assignment, and we write $\semSL{\CGS}{\setting}{}{\pos}{\phi}$ for $\semSL{\CGS}{\setting}{\assign}{\pos}{\phi}$ where $\assign$ is any assignment. We also let $\semSLglobal{\CGS}{\setting}{\phi}=\min_{\pos_\init\in\setpos_\init}\semSL{\CGS}{\setting}{}{\pos_\init}{\phi}$.

\begin{remark} When propositions only take values in $\{-1, 1\}$ and $\Func=\{\top,\lor,\neg\}$,  \NatSLF\ corresponds to a Boolean-valuated extension of \SL\ with Natural Strategies. 
\end{remark}

We define the  
classic abbreviations: ${\perp\colonequals\neg\top}$,
${\varphi\rightarrow \varphi' \colonequals \neg
\varphi \vee \varphi'}$, 
${\varphi\wedge\varphi'
\colonequals \neg (\neg \varphi \vee \neg \varphi')}$,
${\F\psi \colonequals \top \Uu \psi}$,
${\Gg\psi \colonequals \neg \F \neg \psi}$ and  $\Astrat\varphi\colonequals$ $\neg\Estrat\neg\varphi$,
and check that they correspond to the intuition. For instance, $\wedge$ corresponds to $\min$, $\F\psi$ computes the supremum 
of the satisfaction value of $\psi$ over all future points in time, $\Gg\psi$ computes the 
infimum of these values, and $\Astrat\varphi$ minimizes the value of $\varphi$ over all possible strategies $\var$. 

\section{Repeated Keyword Auctions} 
\label{sec:keywordauction}

Modeling mechanisms with monetary transfer and private valuations require handling quantitative features and  imperfect information. 
Memoryless strategies are enough for mechanisms in which all relevant information is encoded in the current state (\eg\ English auction). In repeated auctions, agents may, as well, use information from the previous states for choosing their strategies.

We now focus on using \NatSLF\ to model and verify repeated keyword auctions and related strategies. 
Repeated keyword auctions are used by online search engines 
for selling advertising slots when users perform a search with a keyword \cite{Cary2007}. 
For a keyword of interest, the advertisers (bidders)  submit a bid stating the maximum amount she is willing to pay for a click on her sponsored link. 
When a user submits a query, an auction is run to determinate the slot allocation among the advertisers bidding  on the keyword of interest. 
The most common mechanism for keyword auctions is the Generalized Second Price (GSP) \cite{Cary2007}, in which the agents are allocated slots in decreasing order of bids and the payment for the slot $\slot$ is the bid of the agent allocated to the slot $\slot+1$.  

We assume that $\Func$ contains the function $\pref\,:(x,y)\mapsto 
1$ if $x\leq y$ and $\pref\,:(x,y)\mapsto\lowb$ otherwise; 
and for readability we use the infix notation $x \pref y$ in the formula.  We also assume that $\Func$ contains the equality $=$ and comparison functions $<$, $>$, $\geq$
(defined similarly). 
Finally, we assume $\Func$ contains functions $-$, $\sum$, $\times$, $\backslash$, $min$, $max$ and $argmax$ with the standard meaning (for details, see  \cite{KR2021}). 

Let us fix a price increment $\increment \in (0,1]$, a set of slots $\Slot = \{1, \dots, \qtdSlot\}$, where $\qtdSlot \in \mathbb{N}\setminus\{0\}$. 
Each slot has a click-through rate $\theta_1 > \ldots > \theta_\qtdSlot$, where $\theta_\slot \in [0,1]$ is the probability that the user will click on the advertisement in slot $\slot$. 
The agents in $\Ag$ are the advertisers, each one having a private valuation $\valuation_\ag \in \Valuation_\ag
$ for a click, where $\Valuation_\ag \subset [0,1]$ is a finite set of possible valuations. We assume the valuations are distinct, that is, if $\ag \neq \ag'$, then $\valuation_\ag \neq \valuation_{\ag'}$. 
We denote by $\prec$ an arbitrary order among the agents in $\Ag$, used in case of ties.  
The atomic propositional set is $\Prop = \{\allocprop_{\ag, \slot}, \paymentprop_{\slot}, \valprop_\ag: \ag \in \Ag, \slot \in \Slot\}$, where $\allocprop_{\ag, \slot}$ represents whether agent $\ag$ is allocated to slot $\slot$,  $\paymentprop_{\slot}$ denotes the price of slot $\slot$ and $\valprop_\ag$ denotes $\ag$'s valuation. 
Define $\wCGS_{GSP}=(\Act 
,\setpos,\setlegal,\trans,\val,\setpos_\init,\{\obsrel\}_{\ag\in\Ag})$, where: 
\begin{itemize}
    \item $\Act = \{0+x \times \increment : 0 \leq x \leq \frac{1}{\increment}\}$, where $b \in \Act$ denotes a bid with price $b$ for a click; 
    given 
    $\jmov = (\act_\ag)_{\ag \in \Ag}$, let $rank_{\jmov} = (\ag_1, \dots, \ag_\qtdAg)$ be the sequence of distinct agents in $\Ag$ ordered by their bid, that is,   $i<j$ if $  \act_{\ag_i} > \act_{\ag_j} $ or $ \act_{\ag_i} = \act_{\ag_j} $ and $ \ag_i \prec \ag_j$
    for  $i, j \in \{1,\dots,\qtdAg\}$ with $i \neq j$. In case of draws, the sequence is determined with respect to $\prec$. We let $rank_{\jmov}(i)$ denote the agent in the $i$-th position of the sequence $rank_{\jmov}$.

    \item $\setpos= \{ \statetuple : al_\slot \in \Ag\cup\{none\}$ \& $pr_\slot \in \Act$ \& $ vl_\ag \in \Valuation_\ag$ \& $ \ag \in \Ag$ \& $ 1\leq\slot\leq\qtdSlot\}$, where each state represents the current slot allocation and prices, with  $al_\slot$, $pr_\slot$, and $vl_\ag$ denoting the winner of slot  $\slot$, the price per click of $\slot$ and $\ag$'s valuation, resp.; 
    
    \item For each $\ag \in \Ag$ and $\pos \in \setpos$, $\setlegal(\ag, \pos) = \Act$;
    
    \item For each $\pos \in \setpos$ and 
    $\jmov = (\act_\ag)_{\ag \in \Ag}$ such that $\act_\ag \in \setlegal(\ag, \pos)$, the transition function uses the agent's bids to chose the next allocations and prices and is defined as follows:  $\trans(\pos, (\act_\ag)_{\ag \in \Ag}) = \statetupleprimeval$, where for each agent $\ag$ and slot $\slot$,  (i) $al_{\slot} = 
       rank_{\jmov}(\slot)$ if $ \slot \leq \qtdAg$, and $al_{\slot} =
        none $ otherwise; (ii) $pr_{\slot} = \act_{rank_{\jmov}(\slot+1)} $ if $\slot+1\leq\qtdAg$, and $pr_{\slot} = 0$ otherwise. 

    \item  For each agent $\ag$, slot $\slot \in \Slot$ and state $\pos = \statetuple$, the weight function is defined as follows:
        (i) $\val(\pos, \allocprop_{\ag, \slot}) = 1$ if $al_{\slot}= \ag$, and $\val(\pos, \allocprop_{\ag, \slot}) = 0$ otherwise;
        (ii) $\val(\pos, \paymentprop_{\slot}) = pr_{\slot}$; and
        (iii) $\val(\pos, \valprop_{\ag}) = vl_{\ag}$. 
    
    \item In an initial state, the prices are 0 and the slots are allocated to $none$, that is,     $\setpos_\init = \{\langle none,\allowbreak \ldots,\allowbreak none,\allowbreak 0,\allowbreak \ldots,\allowbreak 0,\allowbreak  vl_{1},\allowbreak \ldots,\allowbreak vl_{\qtdAg} \rangle  \in \setpos\}$; 
    
    \item For each agent $\ag$ and two states $\pos = \statetuple$ and $\pos' = \statetupleprime$ in $\setpos$, the observation relation 
    $\obsrel$ is such that  
    if $\pos \obsrel \pos'$ then 
    (i) $al_{\slot} = al_{\slot'}$, for each $1 \leq \slot \leq \qtdSlot$;
    (ii) $p_{\slot} = p_{\slot'}$, for each $1 \leq \slot \leq \qtdSlot$;
    (iii) $vl_{\ag} = vl_{\ag}'$. 
\end{itemize}

Notice there is exactly one initial state for each possible valuation profile in $(\prod_{\ag\in \Ag}\Valuation_\ag)$. Additionally,  valuations remain unchanged after the initial state. 
We use the formula $\slot^{-1} \egdef 1\backslash \slot$ when it is convenient to  obtain a value in $[-1,1]$ for representing a slot $\slot$. 
The utility of agent $\ag$ when she is assigned to slot $\slot$ is denoted by the formula  
$\phiutil_{\ag, \slot} \egdef 
\theta_\slot \times (\valprop_\ag - \paymentprop_{\slot} )$.
The expected utility for agent $\ag$ depends on her actual allocation, that is, 
$
\phiutil_{\ag} \egdef 
\sum_{\slot \in \Slot} \allocprop_{\ag, \slot} \times \phiutil_{\ag, \slot}
$.

\subsection{Solution concepts for $\wCGS_{GSP}$} 
\label{sec:properties} 
In this section, we show how \NatSLF\ can be used for the 
verification of mechanisms with natural strategies.  
In sight of our motivating example, we aim at rephrasing conditions and properties usually considered in the analysis of keyword auctions 
\cite{Cary2007,edelman2007internet,
varian2007position}. 
 
\head{Nash equilibrium}
Since auctions are noncooperative, the solution concept in the pure strategy setting usually considered is the Nash equilibrium (\NE).  The \NE\ captures the notion of stable solution: a strategy profile is \NE\ if no player can improve her utility through an unilateral change of strategy \cite{AGTbook}.  
With \NatSLK, we restrict the range of strategies to  simple ones, 
as it enables us to reason about artificial agents with limited capabilities and human-friendly strategies.  
Let  $\profile\strat=(\strat_\ag)_{\ag\in\Ag}$ be a profile of strategies 
and $k>0$ and define the formula 
\begin{align*}
\NE(\profile \strat, k) \egdef \bigwedge_{\ag \in \Ag}\Astrat[\varb]  \big[(\Ag_{-\ag},\strat_{-\ag}) & (\ag, \varb) \X \phiutil_{\ag}
\pref 
(\Ag,\profile\strat)\X \phiutil_{\ag} \big]
\end{align*}

The formula $\NE(\profile \strat, k)$ means that, for every agent and alternative strategy $\varb$ of complexity at most $k$, binding to $\varb$ when everyone else binds to their strategies in $\profile \strat$ leads to at most the same utility as when she also binds to her strategy in $\profile \strat$. 
In relation to strategies with complexity at most $k$, the strategy profile $\profile\strat$ leads to a \NE\
in the next state of $\pos$ 
if 
$\semSL{\CGS}{\setting}{\assign}{\pos}{ \NE(\profile \strat, k)} = 1$. 

Predicting outcomes of a keyword auction is a difficult task given the infinite nature of NE continuum \cite{Yuan2017}. For this reason, refined solution concepts have been proposed to reduce the \NE\ continuum to subsets. 
Edelman \textit{et al. } \cite{edelman2007internet} studied the subset called locally envy-free equilibrium (\LEFE), in which no advertiser can improve her utility by exchanging her current slot to the one ranked one position above, given the current prices.  

\head{Locally envy free equilibrium}
Let  $\profile\strat=(\strat_\ag)_{\ag\in\Ag}$ be a profile of strategies, we define the formula 
\begin{align*}
\LEFE(\profile \strat) \egdef \bigwedge_{\ag \in \Ag} 
(\Ag,\profile \strat)
\X \big[
\LEF^\ag \land \LEFloser^\ag
]
\end{align*}
where $\LEF^\ag \egdef \bigwedge_{1<\slot\leq\qtdSlot}  ( \allocprop_{\ag,\slot} = 1 \to \phiutil_{\ag, \slot} \geq \phiutil_{\ag, \slot-1} )$ indicates that when an agent is allocated to a slot, she does not prefer to switch to the slot right above and $\LEFloser^\ag  \egdef ( \bigwedge_{\slot\in\Slot} \allocprop_{\ag,\slot} = 0) \to 0 \geq \phiutil_{\ag, \qtdSlot}$ denotes that agents who were not assigned to any slot do not prefer to get the last slot. 

$\LEFE(\profile \strat)$ means that, for any agent, when everyone follows the strategies in $\profile \strat$, it holds  that (i) if she wins  $\slot$, her utility for $\slot$ is greater than for slot $\slot-1$ (at current prices) and (ii) if she does not get any slot, then her utility for the last slot is at most zero. 
Strategy profile $\profile\strat$ leads to \LEFE\
in the next state of $\pos$ if  $\semSL{\CGS}{\setting}{\assign}{\pos}{ \LEFE(\profile \strat)} = 1$. 
Based on \cite{edelman2007internet,varian2007position}, 
we have that any \LEFE\ is also a \NE: 

\begin{proposition}
\label{prop:LEFEisNE}
For any complexity $k\geq0$, state $\pos \in \setpos$, $\setting \in \{iR,ir\}$ and strategy profile $\profile \strat = (\strat_\ag)_{\ag \in \Ag}$ with $\strat_\ag \in \setstrat^{\setting}_\ag$ for each agent  $\ag \in \Ag$, 
$\semSL{\CGS_{GSP}}{\setting}{}{\pos}{ \LEFE(\profile \strat) \to \NE(\profile \strat, k)} = 1$. 
\end{proposition}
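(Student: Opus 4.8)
The plan is to reduce the quantitative implication to a purely Boolean statement and then carry out the classical local-to-global envy-free argument of Edelman et al. and Varian inside $\wCGS_{GSP}$. Both $\LEFE(\profile\strat)$ and $\NE(\profile\strat,k)$ are built only from the comparison functions $\pref,=,\geq$ together with $\land$, $\to$ and the quantifier $\Astrat[\varb]$, so their satisfaction values lie in $\{\lowb,1\}$ with $\lowb=-1$. Since $\varphi\to\varphi'$ is interpreted as the $\max$ of the negated value of $\varphi$ and the value of $\varphi'$, the value of $\LEFE(\profile\strat)\to\NE(\profile\strat,k)$ equals $1$ exactly when $\semSL{\CGS_{GSP}}{\setting}{}{\pos}{\LEFE(\profile\strat)}=1$ implies $\semSL{\CGS_{GSP}}{\setting}{}{\pos}{\NE(\profile\strat,k)}=1$, so it suffices to assume the former and derive the latter. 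I would first dispose of the degenerate case $k<2$: the shortest admissible strategy is $[(\top\iteration,\act)]$, whose complexity is $2$, so for $k<2$ no strategy of complexity $\le k$ exists, $\Astrat[\varb]$ ranges over the empty set, and $\NE(\profile\strat,k)=1$ holds vacuously. For $k\ge 2$, every single bid $b\in\Act$ is produced at $\pos$ by the constant strategy $[(\top\iteration,b)]$ of complexity $2$.

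Next I would unpack the one-step semantics. From $\pos$ the profile $\profile\strat$ selects a single action profile $\jmov$ — both under $ir$ and $iR$, since $\match$ is applied to the length-one history $\pos$ — yielding $\pos'=\trans(\pos,\jmov)$, and $\phiutil_\ag$, $\LEF^\ag$, $\LEFloser^\ag$ are all evaluated at $\pos'$. Because $\X\phiutil_\ag$ depends on $\pos'$ only, a deviation of $\ag$ to a strategy $\varb$ affects the comparison solely through $\varb$'s action at $\pos$, which for $k\ge2$ ranges over all of $\setlegal(\ag,\pos)=\Act$. Hence $\NE(\profile\strat,k)=1$ is equivalent to the one-shot condition: for every $\ag$ and every bid $b\in\Act$, the value of $\phiutil_\ag$ in $\trans(\pos,(b,\jmov_{-\ag}))$ is at most its value in $\pos'$.

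The core is then the GSP argument at $\pos'$. Writing $p_\slot\egdef\val(\pos',\paymentprop_\slot)$, I would first record that prices are weakly decreasing, since $p_\slot=\act_{rank_{\jmov}(\slot+1)}$ and $rank_{\jmov}$ orders agents by non-increasing bid. Telescoping the local inequalities in $\LEF^\ag$ (each asserting $\phiutil_{\ag,\slot}\geq\phiutil_{\ag,\slot-1}$ for the winner of slot $\slot$) upgrades local to global envy-freeness: the winner of slot $s$ satisfies $\phiutil_{\ag,s}\geq\phiutil_{\ag,t}$ for every slot $t$. I would then bound every bid deviation by cases on where it lands $\ag$. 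If $\ag$ is a winner moving to a slot $t$, I would show she pays at least $p_t$ (the bid of the displaced occupant of slot $t$, respectively of the agent she falls just above), so her deviation utility is at most $\phiutil_{\ag,t}\leq\phiutil_{\ag,s}$. If $\ag$ is a loser entering some slot $t$, then $\LEFloser^\ag$ gives $\valprop_\ag\leq p_\qtdSlot$, and price monotonicity gives $p_t\geq p_\qtdSlot\geq\valprop_\ag$, so the entering utility $\theta_t\cdot(\valprop_\ag-p_t)\leq 0$, matching her current utility $0$.

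The step I expect to be the main obstacle is precisely the deviation analysis across a change of occupancy, and within it the \emph{drop-out} case of a winner bidding herself out for utility $0$: ruling it out requires individual rationality, i.e. $\phiutil_\ag\geq 0$ at $\pos'$ for every winner. When there are at least as many slots as bidders this is immediate, because the lowest occupied slot has price $0$ and global envy-freeness transfers non-negativity upward; in the realistic case of more bidders than slots it must be extracted from $\LEFloser^\ag$ together with price monotonicity — and, as in the cited works, under the standing assumption that the fixed bids of $\profile\strat$ are undominated (do not exceed the respective valuations). Besides this, the remaining technical care is in tracking the GSP prices after a deviation with the discrete action set $\Act$ and the tie-breaking order $\prec$, so that the key inequality ``landing in slot $t$ costs at least $p_t$'' holds in every case; once this is in place, the rest is bookkeeping over the semantics of $\pref$, $\land$ and $\Astrat[\varb]$.
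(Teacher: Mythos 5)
Your overall route coincides with the paper's: reduce to the one-shot deviation at $\pos$, observe that deviations which do not change the agent's rank leave her utility unchanged, and handle rank-changing deviations by upgrading local envy-freeness to stability of the assignment. Where the paper simply cites Lemma~1 of Edelman et al.\ for that last step, you prove it by telescoping the $\LEF^\ag$ inequalities together with price monotonicity; that is precisely the content of the cited lemma, so the two arguments are the same in substance. Your preliminary observations (both formulas are $\{-1,1\}$-valued so the implication reduces to a Boolean one; the $ir$/$iR$ distinction is immaterial for a one-step property since $\match$ is applied to the length-one history $\pos$; for $k\geq 2$ every bid in $\Act$ is realised by a constant natural strategy, and for smaller $k$ the quantifier is vacuous) are sound housekeeping that the paper leaves implicit.

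The one place where you diverge from the paper is the drop-out case, and you are right to single it out: it is a genuine gap, in the paper's sketch as much as in your proposal. The formula $\LEFE(\profile\strat)$ only constrains winners relative to the slot immediately above (via $\LEF^\ag$) and losers relative to the last slot (via $\LEFloser^\ag$); it says nothing about a winner's utility being non-negative. Concretely, with one slot and two agents, let the winner's valuation be $0.5$ and her bid $0.9$, and let the runner-up have valuation $0.7$ and bid $0.8$: then $\LEF$ is an empty conjunction, $\LEFloser$ holds for the loser since $0.7-0.8<0$, so $\LEFE(\profile\strat)$ evaluates to $1$, yet the winner pays $0.8>0.5$ and profits by deviating to the constant bid $0$ and losing, so $\NE(\profile\strat,k)$ fails for any $k\geq 2$. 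Edelman et al.\ do not face this because their locally envy-free equilibrium is \emph{by definition} a Nash equilibrium with an extra local condition, so individual rationality of winners comes for free; the paper's $\LEFE$ drops the equilibrium half of that definition, and its appeal to ``stability'' silently imports the individual rationality it no longer has. The fix you suggest --- restricting to non-overbidding profiles, or adding winners' individual rationality to $\LEFE$ --- is exactly what is needed; without such a hypothesis neither your argument nor the paper's closes this case.
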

\begin{proof}[Proof sketch] 
Let $\assign$ be an assignment, $k\geq0$ be a complexity bound for strategies, $\pos \in \setpos$ be a state, $\setting \in \{iR,ir\}$,  and $\profile \strat = (\strat_\ag)_{\ag \in \Ag}$ be a profile of $\setting$-strategies. 
Assume $\semSL{\CGS_{GSP}}{\setting}{\assign
}{\pos}{ \LEFE(\profile \strat)} = 1$.  
Let $\pos_{\profile \strat}=\trans(\pos, \profile b)$ where  $\profile b = (b_\agb)_{\agb \in \Ag}$ and $b_\agb = \action_{\match(\pos, \sigma_\agb)}$ denotes the action performed by $\agb$ in $\pos$ if she follows $\strat_\agb$. 
By definition, $\val(\pos_{\profile \strat},\allocprop_{\agb,\slot}) = 1$ iff $\agb = rank_{\profile b}(\slot)$ is the winner of slot $\slot$ and her payment is $\val(\pos_{\profile \strat},\paymentprop_\slot) = b_{rank_{\profile b}}(\slot+1)$. 
For each slot $\slot \in \{1, \dots, min(\qtdSlot, \qtdAg)\}$,  we consider whether its winner $\ag = rank_{\profile b}(\slot)$ 
could improve her utility by deviating to strategy $\varb \in \{\alpha \in \setstrata^{\setting}        : compl_{\setting}(\alpha) \leq k\}$. The case for agents who were not assigned any slot is proved similarly. 
Denote by $\bar{b} = \match(\pos, t) \in \Act$ the action that $\ag$ would take if she followed $\varb$ and $\pos_{(\strat_{-\ag}, \varb)} =\trans(\pos, (b_{-\ag}, \bar{b}))$ the next reached state from $\pos$ when she follows $\varb$ and others play according to $\strat$. 

If $\slot = 1$, any $\bar{b}\geq \val(\pos_{\profile \strat},\paymentprop_\slot)$ does not change the outcome of the auction and 
$\semSL{\CGS_{GSP}}{\setting}{\assign
}{\pos}{(\Ag_{-\ag},\strat_{-\ag}^{\setting}) (\ag, \varb) \X \phiutil_{\ag} = (\Ag,\profile\strat)\X \phiutil_{\ag}}=1$. The same holds when  $1<\slot<m$ for $\bar{b} \in [\val(\pos_{\profile \strat},\paymentprop_\slot), \val(\pos_{\profile \strat},\paymentprop_{\slot-1})]$ and when  $\slot=\qtdSlot$ for $\bar{b} \leq \val(\pos_{\profile \strat},\paymentprop_\qtdSlot)$. 
In the remaining cases, $\ag$ would change her position in $rank_{\profile b}$ with other agent. 
By the results of
\cite{edelman2007internet} (see Lemma 1), the outcome given by bids $\profile b$ is a stable assignment, that is, no advertiser can profitably rematch by changing her position with any other advertiser. Thus, $\semSL{\CGS_{GSP}}{\setting}{\assign}{\pos}{
(\Ag_{-\ag},\strat_{-\ag}) (\ag, \varb) \X \phiutil_{\ag}
\pref  
(\Ag,\profile\strat)\X \phiutil_{\ag}} =1$. 
\end{proof}

As \LEFE\ is still an equilibrium continuum, 
Edelman \textit{et al. } \cite{edelman2007internet} characterize an equilibrium in which the slot allocation and payments coincide with the ones in the dominant-strategy equilibrium (DSE) of the Vickrey–Clarke–Groves (VCG) mechanism.

Let $\profile \valuation = (\valuation_\ag)_{\ag \in \Ag}$ be a valuation profile. 
Truthfully reporting $\profile \valuation$ is the DSE of VCG \cite{AGTbook}. For each slot $\slot$ and agent $\ag$, the allocation rule for VCG in the keyword auction is the same as under GSP \cite{edelman2007internet}: 
$\VCGallocprop_{\ag,\slot}(\profile \valuation) = 1 \text{ if } rank_{\profile \valuation}(\slot)=\ag \text{ and }\slot \leq \qtdAg$. Otherwise, $\VCGallocprop_{\ag,\slot}(\profile \valuation) = 0 $. The payment for the last slot $\qtdSlot$ is 
$\VCGpaymentprop_{\qtdSlot}(\profile \valuation) = \theta_{\qtdSlot} \cdot 
     \valuation_{rank_{\profile\valuation}(\qtdSlot+1)}$ if $\qtdSlot+1\leq \qtdAg$ and $\VCGpaymentprop_{\qtdSlot}=0$ otherwise. For the remaining slots $1\leq \slot < \qtdSlot$,  
     $\VCGpaymentprop_{\slot}(\profile \valuation) = (\theta_\slot- \theta_{\slot+1} )\cdot 
     \valuation_{rank_{\profile\valuation}   (\slot+1)} + \VCGpaymentprop_{\slot+1}(\profile \valuation)$. We assume $\VCGpaymentprop_\slot(\profile\valuation)$ and $\VCGallocprop_{\slot, \ag}(\profile\valuation)$ are functions in $\Func$.

\head{VCG outcome} The following formula denotes whether the allocation and payments in the next state are the same as the ones for the VCG when agents bid truthfully:
 \[\phiVCG(\profile \strat) \egdef (\Ag,\strat) \X\big[\bigwedge_{\slot\in\Slot}( \paymentprop_\slot = \VCGpaymentprop_\slot(\profile\valprop) \land \bigwedge_{\ag \in \Ag} \allocprop_{\slot, \ag} = \VCGallocprop_{\slot, \ag}(\profile\valprop))\big]\]

If  a strategy profile leads to the VCG outcome, then 
it is a LEFE: 
\begin{proposition}
\label{prop:VCGeqisLEFE}
For any  state $\pos \in \setpos$, $\setting \in \{iR,ir\}$ and strategy profile $\profile \strat = (\strat_\ag)_{\ag \in \Ag}$ with $\strat_\ag \in \setstrat^{\setting}_\ag$ for each $\ag$, 
$\semSL{\CGS_{GSP}}{\setting}{}{\pos}{\phiVCG(\profile \strat) \to \LEFE(\profile \strat)} = 1$. 
\end{proposition}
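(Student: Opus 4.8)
The plan is to first collapse the quantitative implication to a Boolean one, and then to recognize the surviving statement as the local envy-freeness of the VCG outcome. Observe that every atom of $\phiVCG(\profile\strat)$ and of $\LEFE(\profile\strat)$ is consumed by a comparison function ($=$, $\geq$, $\pref$), and the results are combined only with $\land$, $\to$, $\X$ and the bindings $(\Ag,\profile\strat)$; hence both sentences take values in $\{\lowb,1\}$. Since $\phiVCG\to\LEFE$ abbreviates $\neg\phiVCG\ou\LEFE$, its value equals $\max(-\,\semSL{\CGS_{GSP}}{\setting}{}{\pos}{\phiVCG(\profile\strat)},\semSL{\CGS_{GSP}}{\setting}{}{\pos}{\LEFE(\profile\strat)})$, which is $1$ exactly when $\phiVCG(\profile\strat)$ evaluates to $\lowb$, or $\LEFE(\profile\strat)$ evaluates to $1$. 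It therefore suffices to assume $\semSL{\CGS_{GSP}}{\setting}{}{\pos}{\phiVCG(\profile\strat)}=1$ and derive $\semSL{\CGS_{GSP}}{\setting}{}{\pos}{\LEFE(\profile\strat)}=1$; this reduction, and everything that follows, is insensitive to $\setting$, since only the single successor state matters.

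Next I would read off the structure of the unique successor reached from $\pos$ under $\profile\strat$. Writing $b_\ag=\action_{\match(\pos,\strat_\ag)}(\strat_\ag)$ and $\pos_{\profile\strat}=\trans(\pos,\profile b)$, the conjunction inside $\phiVCG$ being $1$ forces $\val(\pos_{\profile\strat},\allocprop_{\ag,\slot})=\VCGallocprop_{\ag,\slot}(\profile\valuation)$ and $\val(\pos_{\profile\strat},\paymentprop_\slot)=\VCGpaymentprop_\slot(\profile\valuation)$ for every agent $\ag$ and slot $\slot$. Thus at $\pos_{\profile\strat}$ slot $\slot$ is held by $rank_{\profile\valuation}(\slot)$ (agents ordered by decreasing valuation) and the prices obey the VCG recursion. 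To organise the arithmetic I would abbreviate $v_k:=\valuation_{rank_{\profile\valuation}(k)}$, so that $v_1\geq v_2\geq\cdots$, and $P_\slot:=\VCGpaymentprop_\slot(\profile\valuation)$, recording the defining relations $P_\qtdSlot=\theta_\qtdSlot v_{\qtdSlot+1}$ and $P_{\slot-1}=(\theta_{\slot-1}-\theta_\slot)v_\slot+P_\slot$.

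It then remains to verify the two conjuncts of each $(\Ag,\profile\strat)\X[\LEF^\ag\land\LEFloser^\ag]$ at $\pos_{\profile\strat}$. For $\LEF^\ag$, fix a winner $\ag=rank_{\profile\valuation}(\slot)$ of a slot $\slot$ with $2\leq\slot\leq\qtdSlot$, so $\valuation_\ag=v_\slot$; I would expand $\phiutil_{\ag,\slot}$ and $\phiutil_{\ag,\slot-1}$ at $\pos_{\profile\strat}$, substitute the recursion for $P_{\slot-1}$, and check that $\phiutil_{\ag,\slot}\geq\phiutil_{\ag,\slot-1}$. The difference is governed by the increment $(\theta_{\slot-1}-\theta_\slot)v_\slot$, and the inequality holds because the VCG prices render the winner of slot $\slot$ (weakly) indifferent to taking slot $\slot-1$ at the current prices --- precisely the local envy-freeness of the VCG assignment established by Edelman et al.~\cite{edelman2007internet}. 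For $\LEFloser^\ag$, a non-winner is ranked below every slot holder, hence $\valuation_\ag\leq v_{\qtdSlot+1}$; substituting $P_\qtdSlot=\theta_\qtdSlot v_{\qtdSlot+1}$ into $\phiutil_{\ag,\qtdSlot}$ yields $\phiutil_{\ag,\qtdSlot}\leq 0$, which is exactly what $\LEFloser^\ag$ demands.

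The hard part is the careful bookkeeping of the recursive payments across the boundary and degenerate cases: slot $\slot=1$ is excluded from $\LEF^\ag$ and so is vacuous; the last slot $\slot=\qtdSlot$ must be handled through the base case of the recursion; and one must distinguish the regime $\qtdAg\leq\qtdSlot$, where the surplus slots are allocated to $none$ and the corresponding antecedents $\allocprop_{\ag,\slot}=1$ are false, from the regime $\qtdAg>\qtdSlot$, where genuine losers exist and are covered by $\LEFloser^\ag$. Once the recursion for $\VCGpaymentprop$ is in place, each inequality collapses to the indifference identity above, so what remains is a routine verification rather than a further argument.
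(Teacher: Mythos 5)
Your proposal is correct and follows the same skeleton as the paper's proof: assume the antecedent has value $1$, identify the unique successor state reached under $\profile\strat$ with the truthful VCG outcome (agents ranked by decreasing valuation, prices satisfying the VCG recursion), and then verify $\LEFloser^\ag$ and $\LEF^\ag$ separately, with the loser case handled by the same substitution of $\VCGpaymentprop_\qtdSlot$ into $\phiutil_{\ag,\qtdSlot}$. The one genuine divergence is how the $\LEF^\ag$ case is closed for a slot winner: the paper argues by contradiction, noting that $\phiutil_{\ag,\slot-1}>\phiutil_{\ag,\slot}$ would give $\ag$ a profitable rematch in the game induced by VCG, contradicting the dominance of truthful bidding there; you instead unfold the recursion $\VCGpaymentprop_{\slot-1}=(\theta_{\slot-1}-\theta_\slot)\cdot\valuation_{rank_{\profile\valuation}(\slot)}+\VCGpaymentprop_{\slot}$ and check the inequality by direct computation, making the winner's (weak) indifference between $\slot$ and $\slot-1$ explicit. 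Your route is more self-contained and elementary, while the paper's is shorter but imports the dominant-strategy property of VCG as a black box; your preliminary step collapsing the quantitative implication to a Boolean one is left implicit in the paper but is worth stating. One caution if you actually carry out the algebra: $\VCGpaymentprop_\slot$ as defined is a total expected payment, whereas $\phiutil_{\ag,\slot}=\theta_\slot\times(\valprop_\ag-\paymentprop_\slot)$ treats the price as per-click, and the exact-indifference identity only falls out under a consistent (total-payment) reading; this convention issue is equally present in the paper's own loser-case computation, so it does not distinguish your argument from theirs, but your direct-computation route is more exposed to it than the paper's contradiction argument.
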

\begin{proof} [Proof sketch] 
Let $\assign$ be an assignment, $\pos \in \setpos$ be a state, $\setting \in \{iR,ir\}$ and $\profile \strat = (\strat_\ag)_{\ag \in \Ag}$ be a profile of $\setting$-strategies. 
We denote $\profile\valuation = (\valuation_\ag)_{\ag \in \Ag}$ where  $\valuation_\ag = \val(\pos, \valprop_{\ag})$.
Assume $\semSL{\CGS_{GSP}}{\setting}{\assign}{\pos}{\phiVCG(\profile \strat)} = 1$, then we have $\semSL{\CGS_{GSP}}{\setting}{\assign}{\pos}{(\Ag,\profile\strat)\X \big[\bigwedge_{\slot \in \Slot} (\paymentprop_\slot = \VCGpaymentprop_\slot(\profile \valprop) \land \bigwedge_{\ag \in \Ag} \allocprop_{\slot, \ag} = \VCGallocprop_{\slot, \ag}(\profile \valprop))\big]}=1$. 
We denote by $\pos_{\profile \strat} = \trans(\pos, \profile b)$ the state succeeding $\pos$ when agents follow $\profile \strat$, where  $\profile b = (b_\agb)_{\agb \in \Ag}$ and $b_\agb = \action_{\match(\pos, \sigma_\agb)}$. 
Let $\slot>1$ be a slot and $\ag$ be an agent. 
By the definition of $\phiVCG$, $ \val(\pos_{\profile \strat},\allocprop_{\slot, \ag}) = 1$ if $rank_{\profile \valuation}(\slot)=\ag $ and $\slot \leq \qtdAg$. Otherwise, $\VCGallocprop_{\ag,\slot} = 0$. 
    
Given that allocations in $\pos_{\profile \strat}$ are the same as in the (truthful) outcome of VCG, it must be the case that $rank_{\profile b}(\slot) = rank_{\profile \valuation}(\slot)$. Thus, agents were allocated by descending order of their valuations (recall the valuations are distinct). 

According to the weight function, each agent is allocated to at most one slot. 
We consider first the case in which $\ag$ is not allocated to any slot, \ie\  $max_{\slot' \in \Slot}(\val(\pos_{\profile \strat},\allocprop_{\ag,\slot'})) = 0$.  This case happens when $\qtdSlot<\qtdAg$, that is, there are no enough slots for all agents. 
The utility of $\ag$ for the slot $\qtdSlot$ is $\phiutil_{\ag, \qtdSlot} =  \theta_{\qtdSlot}  (\valuation_\ag - \val(\pos_{\profile \strat},\paymentprop_\qtdSlot))$.  
By the definition of $\VCGpaymentprop_\qtdSlot$, 
$\phiutil_{\ag, \qtdSlot} =  \theta_{\qtdSlot}  (\valuation_\ag -  \theta_{\qtdSlot}        \valuation_{rank_{\profile\valuation(\qtdSlot+1)}}     )$. That is,  $\phiutil_{\ag, \qtdSlot} =  \theta_{\qtdSlot}  \valuation_\ag -  \theta_{\qtdSlot}^2        \valuation_{rank_{\profile\valuation(\qtdSlot+1)}}$. Since  $\valuation_{\ag}<\valuation_{rank_{\profile b}(\qtdSlot+1)}$, we have that $\phiutil_{\ag, \qtdSlot}< 0$. Thus, $\semSL{\CGS_{GSP}}{\setting}{\assign}{\pos_{\profile \strat}}{ 0 > \phiutil_{\ag, \qtdSlot}} = 1$ and $  \semSL{\CGS_{GSP}}{\setting}{\assign}{\pos_{\profile \strat}}{\LEFloser}=1$. 

Now we verify the case $\ag$ was assigned to a  slot $1<\slot\leq\qtdSlot$. Assume for the sake of contradiction, that $  \semSL{\CGS_{GSP}}{\setting}{\assign}{\pos_{\profile \strat}}{ \phiutil_{\ag, \slot-1} > \phiutil_{\ag, \slot}} = 1$. 
Then, in the game induced by VCG, $\ag$ would have an incentive to switch her bid with the agent in slot $\slot-1$, which is a contradiction since the bidding $\valuation_\ag$ is the dominant strategy for $\ag$ in VCG. 
\end{proof}

In fact, from \cite{edelman2007internet,varian2007position} the VCG payments are the lower bound of locally envy-free equilibrium. Thus, in any other locally envy-free equilibrium the total revenue obtained by GSP is at least as high as the one obtained by VCG in equilibrium.
\begin{corollary}
For any  state $\pos \in \setpos$, $\setting \in \{iR,ir\}$ and strategy profile $\profile \strat = (\strat_\ag)_{\ag \in \Ag}$ with $\strat_\ag \in \setstrat^{\setting}_\ag$ for each agent $\ag$, 
$\semSL{\CGS_{GSP}}{\setting}{}{\pos}{ \LEFE(\profile \strat) \to \sum_{\slot \in \Slot}(\paymentprop_\slot) \geq  \sum_{\slot \in \Slot}(\VCGpaymentprop_\slot) }=1$. 
\end{corollary}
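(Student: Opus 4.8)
The plan is to first collapse the quantitative implication to a single ordinary inequality. Since $\LEFE(\profile\strat)$ is assembled only from the comparison functions ($=$ and $\geq$), negation and conjunction, its satisfaction value is Boolean, i.e.\ it lies in $\{\lowb,1\}$. As $\to$ abbreviates $\neg(\cdot)\ou(\cdot)$ with $\ou=\max$, the value of the displayed formula at $\pos$ is $\max\big(-\semSL{\CGS_{GSP}}{\setting}{\assign}{\pos}{\LEFE(\profile\strat)},\,\semSL{\CGS_{GSP}}{\setting}{\assign}{\pos}{\sum_{\slot\in\Slot}\paymentprop_\slot\geq\sum_{\slot\in\Slot}\VCGpaymentprop_\slot}\big)$. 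If $\semSL{\CGS_{GSP}}{\setting}{\assign}{\pos}{\LEFE(\profile\strat)}=\lowb$, the first argument is $1$ and the claim holds trivially; so the only real work is the case where $\LEFE(\profile\strat)$ holds, in which I must show the payment inequality evaluates to $1$.

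Next I would reuse the set-up of the proof of Proposition~\ref{prop:VCGeqisLEFE}, passing to the auction outcome $\pos_{\profile\strat}=\trans(\pos,\profile b)$ reached from $\pos$ under $\profile\strat$ (the state in which the conjuncts of $\LEFE(\profile\strat)$ are evaluated, through its leading $(\Ag,\profile\strat)\X$, and the state whose realized payments the comparison is about). Write $g_\slot=rank_{\profile b}(\slot)$ for the winner of slot $\slot$ and $p_\slot=\val(\pos_{\profile\strat},\paymentprop_\slot)$ for its per-click price. Invoking that a locally envy-free assignment is assortative (Lemma~1 of~\cite{edelman2007internet}), we get $g_\slot=rank_{\profile\valuation}(\slot)$, so the indices occurring in $\VCGpaymentprop_\slot$ coincide with the actual winners. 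I would then read the two conjuncts of LEFE at $\pos_{\profile\strat}$: the clause $\LEF^{g_\slot}$ gives, for each $1<\slot\leq\min(\qtdSlot,\qtdAg)$, the no-upward-envy inequality $\theta_\slot(\valuation_{g_\slot}-p_\slot)\geq\theta_{\slot-1}(\valuation_{g_\slot}-p_{\slot-1})$, while $\LEFloser^{g}$ applied to the highest-valued unallocated agent, namely $rank_{\profile\valuation}(\qtdSlot+1)$, gives $p_\qtdSlot\geq\valuation_{rank_{\profile\valuation}(\qtdSlot+1)}$ whenever $\qtdSlot+1\leq\qtdAg$.

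Finally, setting $P_\slot=\theta_\slot p_\slot$ for the total payment of slot $\slot$, rearranging the no-upward-envy inequality (at index $\slot+1$) produces $P_\slot-P_{\slot+1}\geq(\theta_\slot-\theta_{\slot+1})\,\valuation_{rank_{\profile\valuation}(\slot+1)}$, which is exactly the telescoping difference $\VCGpaymentprop_\slot-\VCGpaymentprop_{\slot+1}$ of the VCG recursion; and the loser clause furnishes the base case $P_\qtdSlot\geq\theta_\qtdSlot\valuation_{rank_{\profile\valuation}(\qtdSlot+1)}=\VCGpaymentprop_\qtdSlot$ (with the degenerate case $P_\qtdSlot=\VCGpaymentprop_\qtdSlot=0$ when $\qtdSlot\geq\qtdAg$). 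A downward induction on $\slot$ then yields $P_\slot\geq\VCGpaymentprop_\slot$ for every slot, which is precisely the fact that VCG payments lower-bound locally envy-free payments~\cite{edelman2007internet,varian2007position}. Since prices are nonnegative and every $\theta_\slot\in[0,1]$, we have $p_\slot\geq\theta_\slot p_\slot=P_\slot$, hence $\sum_{\slot\in\Slot}p_\slot\geq\sum_{\slot\in\Slot}P_\slot\geq\sum_{\slot\in\Slot}\VCGpaymentprop_\slot$ and the payment inequality evaluates to $1$. I expect the main obstacle to be the downward induction together with reconciling the GSP per-click prices with the total VCG payments; establishing assortativity and checking the $\qtdSlot$-versus-$\qtdAg$ edge cases should be routine bookkeeping.
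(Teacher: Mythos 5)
Your argument is correct, and in substance it follows the same route as the paper: the paper offers no proof of this corollary at all, merely invoking the fact from Edelman et al.\ and Varian that VCG payments are the slot-wise lower bound of locally envy-free payments, and your downward induction --- telescoping the no-upward-envy inequality $\theta_{\slot+1}(\valuation_{g_{\slot+1}}-p_{\slot+1})\geq\theta_{\slot}(\valuation_{g_{\slot+1}}-p_{\slot})$ into $\theta_{\slot}p_{\slot}-\theta_{\slot+1}p_{\slot+1}\geq(\theta_{\slot}-\theta_{\slot+1})\valuation_{rank_{\profile\valuation}(\slot+1)}$, anchored by the loser clause $p_{\qtdSlot}\geq\valuation_{rank_{\profile\valuation}(\qtdSlot+1)}$ and by assortativity of stable assignments --- is precisely the standard proof of that cited fact, so what you add is a self-contained derivation of what the paper outsources to the references, plus the Boolean case split on the quantitative implication. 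Two remarks on details you handled that the paper glosses over. First, you correctly noticed a unit mismatch: $\paymentprop_{\slot}$ is a per-click price while $\VCGpaymentprop_{\slot}$ is a click-weighted total payment, so the clean inequality is $\sum_{\slot}\theta_{\slot}p_{\slot}\geq\sum_{\slot}\VCGpaymentprop_{\slot}$ and the stated one only follows through the crutch $p_{\slot}\geq\theta_{\slot}p_{\slot}$; your bridge is valid but worth making explicit as a weakness of the statement rather than of the argument. Second, both you and the paper tacitly read the consequent as evaluated in the successor state $\pos_{\profile\strat}$; as literally written it carries no $(\Ag,\profile\strat)\X$ prefix and would be evaluated at $\pos$ itself, where it can fail (e.g.\ in an initial state all prices are $0$). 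That is a defect of the corollary's phrasing, not a gap in your proof, but you should state the charitable reading explicitly rather than slipping it in as ``the state whose realized payments the comparison is about.''
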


The solution concepts characterized in the previous section are considered 
in a single stage of the game.  
Since the auction is repeated, advertisers can change their bids very frequently and one may investigate whether the  prices stabilize and at what values  \cite{edelman2007internet}. 
Stable bids must be best responses to each other, that is, the bids form an (one-shot) equilibrium.  
Cary \textit{et al.} \cite{Cary2007} raises the problem on whether there exists 
a ``natural bidding strategy'' for the advertisers that would lead to equilibrium. 

\head{Convergence} The concept of convergence or stabilization can be  easily encoded in \NatSLF:  we say a \wCGS $\wCGS$ converge to a property $\phi$ if the initial states lead to $\phi$ being eventually always the case. Formally, a \wCGS converge to a condition $\phi$ if $\semSL{\CGS}{\setting}{\assign}{\pos_\init}{\F\Gg(\phi)} = 1
$ for each initial state $\pos_\init \in \setpos_\init$.

\subsection{Natural Strategies for $\wCGS_{GSP}$ }
Given agent $\ag$ and the wCGS $\wCGS_{GSP}$, we exemplify strategies for $\ag$ in a repeated keyword auction. 
For readability, we omit the epistemic operator $\Ka$ from an epistemic condition $\Ka \phi$ when the satisfaction value of $\phi$ is  known by $\ag$ in all states. 
A common approach for an advertiser is to assume that all the other bids will remain fixed in the next round and target the slot that maximizes her utility at  current prices. This mechanism allows a range of bids that will result in the same outcome from $\ag$’s perspective, so a number of 
strategies are distinguished by the bid choice  within this range.  


\head{Balanced bidding}
In the balanced bidding strategy (BB) \cite{Cary2007}, the agent 
bids so as to be indifferent between successfully winning the targeted slot at its current price, or winning a slightly more desirable slot at her bid price. 
The natural strategy representing  balanced bidding for agent $\ag$ is denoted $BB_\ag$ and is constructed in three parts. 
First, include the guarded actions  $(BB_{\ag,1}(b),b)$ for each action $b\in \Act$. Second, include $(BB_{\ag,2}(b,\slot),b)$ for each $b\in \Act$ and $1<\slot\leq\qtdSlot$. Third, the last guarded action is $(\top, 0)$.  The condition $BB_{\ag,1}(b)$ refers to the case in which the slot maximizing $\ag$'s utility is the top slot and $b$ 
is $(\valprop_\ag+\paymentprop_{1})/2$: 
\[BB_{\ag, 1}(b) \egdef 
b= \frac{\valprop_\ag+\paymentprop_{1}}{2} \land  (argmax_{\slot\in \Slot(\phiutil_{\ag, \slot})})^{-1} = 1 
\]

Condition $BB_{\ag, 2}(b, \slot)$ denotes the case in which the slot $\slot\neq1$ maximizes $\ag$'s utility and  $b$  is the bid value that is high enough to force the prices paid by her competitors to rise, but not so high that 
she would mind getting a higher slot at a price just below $b$. 
\begin{align*}
BB_{\ag, 2}(b, \slot) \egdef \phiutil_{\ag, \slot} 
= \theta_{\slot-1} \times (\valprop_\ag - b) 
\land (argmax_{\slot' \in\Slot
}
(\phiutil_{\ag, \slot'} 
))^{-1} = \slot^{-1}
\end{align*} 
Notice the guarded action $BB_{\ag, 2}(b, \slot)$ is  defined for $\slot>1$ since it compares the utility with the one for $\slot-1$. The case $\slot=1$ is treated by the guarded action $BB_{\ag, 1}(b)$. 

Given a valuation profile $\profile\valuation = (\valuation_\ag)_{\ag \in \Ag}$, let $\eta_x$ be the agent in the $x$-th position of $rank_{\profile \valuation}$ (that is, $\eta_x$ is the agent with $x$-th highest valuation).
We let $b_{\eta_x}(\profile \valuation)$ be a function in $\Func$ 
defined as follows: 
\[b_{\eta_x}(\profile \valuation) = 
\begin{dcases}
     \frac{\theta_{x}}{\theta_{x-1}}
    \cdot b_{rank_{\profile \valuation}(x+1)}(\profile \valuation)+ (1- \frac{\theta_{x}}{\theta_{x-1}}
    )\valuation_{\eta_x} & \text{ if } x\geq \qtdSlot+1  \\
    \valuation_{\eta_x} & \text{ if } 2 \leq x\leq \qtdSlot
\end{dcases}
\] 

If $\profile{BB}= (BB_\ag)_{\ag \in \Ag}$ converges to the equilibrium with VCG outcomes, the agent with the highest valuation 
bids any value above $b_{\eta_2}(\profile \valuation)$. The equilibrium bid for $\ag \neq \eta_1$
is  $b_{\ag}(\profile \valuation)$ \cite{Cary2007}. 
When there are two slots and all players update their bids according to BB, the game converges to the equilibrium with VCG outcome. However, this is not the case for  more than two slots \cite{Cary2007}.

\begin{proposition}
\label{prop:bb} 
For any initial state $\pos_\init \in \setpos_\init$,  state $\pos \in \setpos$, and $1< x \leq \qtdAg$,  the following holds, where $\profile \valuation = (\val(\pos, \valprop_\ag))_{\ag \in \Ag}$:

\begin{enumerate}
    \item  \label{prop:BB0} If $\semSL{\CGS_{GSP}}{ir}{}{\pos}{\phiVCG(\profile{BB})  } = 1$, then $\action_{\match(\pos, BB_{\eta_x})} = b_{\eta_{x}}(\profile \valuation)$ and $\action_{\match(\pos, BB_{\eta_1})}>b_{\eta_2}(\profile \valuation)$; 
    
    \item \label{prop:BB1} If $\qtdSlot=2$, then   $\semSL{\CGS_{GSP}}{ir}{}{\pos_\init}{\F\Gg(\phiVCG(\profile{BB}))} = 1$;
    
    \item \label{prop:BB2} If $\qtdSlot\geq3$, then   $\semSL{\CGS_{GSP}}{ir}{}{\pos_\init}{\F\Gg(\phiVCG(\profile{BB}))} \neq 1$.    
\end{enumerate}
\end{proposition}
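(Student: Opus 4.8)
The three items split into one one-step claim and two infinite-horizon claims: I would prove item~\ref{prop:BB0} by a characterisation internal to $\wCGS_{GSP}$, and reduce items~\ref{prop:BB1}--\ref{prop:BB2} to the analysis of the synchronous balanced-bidding process in~\cite{Cary2007}.

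\emph{Item~\ref{prop:BB0}.} Write $b_\ag=\action_{\match(\pos,BB_\ag)}$ and $\pos'=\trans(\pos,(b_\ag)_{\ag\in\Ag})$, and assume $\semSL{\CGS_{GSP}}{ir}{}{\pos}{\phiVCG(\profile{BB})}=1$, so $\pos'$ realises the VCG allocation and payments. First I would use the allocation part: since VCG ranks by the (distinct) valuations and GSP ranks by bids, the bid order must coincide with the valuation order, i.e. $rank_{(b_\ag)}(\slot)=\eta_\slot$ for $\slot\le\qtdSlot$. Next I would read the payments off the GSP rule: the per-click price of slot $\slot$ in $\pos'$ is the bid of the agent placed in slot $\slot+1$, namely $b_{\eta_{\slot+1}}$, and by hypothesis this equals $\VCGpaymentprop_\slot(\profile\valuation)$; hence $b_{\eta_x}=\VCGpaymentprop_{x-1}(\profile\valuation)$ for the relevant range of $x$. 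It then remains to verify, purely at the level of definitions, that $\VCGpaymentprop_{x-1}(\profile\valuation)$ equals $b_{\eta_x}(\profile\valuation)$, which I would do by downward induction on $x$, matching the recursion defining $\VCGpaymentprop_\slot$ against the recursive clause of $b_{\eta_x}(\profile\valuation)$; unfolding the indifference equation $\theta_x(\valprop_{\eta_x}-\val(\pos,\paymentprop_x))=\theta_{x-1}(\valprop_{\eta_x}-b)$ of $BB_{\eta_x,2}(b,x)$ at a fixed-point state gives the same recursion as a sanity check. For the top agent, being ranked first in $\pos'$ forces $b_{\eta_1}>b_{\eta_2}=b_{\eta_2}(\profile\valuation)$, with strictness supplied by rule $BB_{\eta_1,1}$ and the fact that $\valprop_{\eta_1}$ is maximal; this gives the second conjunct.

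\emph{Items~\ref{prop:BB1} and~\ref{prop:BB2}.} The plan is to turn the temporal formula into a statement about the bid dynamics. Since $\profile{BB}$ is memoryless and $\wCGS_{GSP}$ is deterministic, $\out(\assign,\pos_\init)$ is the unique play $\pos_0\pos_1\ldots$ in which, at each round, every agent simultaneously rebids by $BB_\ag$ against the previous round's prices; this is exactly the synchronous balanced-bidding process of~\cite{Cary2007}, whose only VCG-outcome state, by item~\ref{prop:BB0}, is the fixed point determined by the bids $b_{\eta_x}(\profile\valuation)$. By the $\sup/\inf$ semantics of $\F\Gg$, $\semSL{\CGS_{GSP}}{ir}{}{\pos_\init}{\F\Gg(\phiVCG(\profile{BB}))}=1$ holds iff some suffix of the play consists solely of states whose successor is the VCG outcome, i.e. iff the process reaches that fixed point after finitely many rounds and stays there. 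For item~\ref{prop:BB1} I would invoke the two-slot convergence theorem of~\cite{Cary2007} to conclude the process reaches the fixed point, so the value is $1$; for item~\ref{prop:BB2} I would exhibit, following the three-slot counterexample of~\cite{Cary2007}, that from the zero-price initial state the dynamics settle into a cycle avoiding the VCG outcome, so $\phiVCG$ is below $1$ cofinally often and no suffix is uniformly $1$, whence the value is $\neq 1$.

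I expect two points to require the most care. The first is the bookkeeping in item~\ref{prop:BB0}: the case split of $b_{\eta_x}(\profile\valuation)$ (winning versus excluded agents), the recursion for $\VCGpaymentprop_\slot$, and the $\theta$-factors relating per-click prices to the indifference equation of $BB_{\ag,2}$ must be aligned exactly, and the boundary cases ($\slot=1$, $\slot=\qtdSlot$, and the excluded agents $x>\qtdSlot$) handled separately. The second, and the genuine obstacle, is the passage from the continuous model of~\cite{Cary2007} to the finite $\increment$-grid $\Act$: the fixed-point bids and the intermediate best responses need not be representable on the grid, so for item~\ref{prop:BB1} I must ensure the process reaches the fixed point \emph{exactly} in finite time (not merely asymptotically) and that the guarded conditions of $BB_\ag$ actually fire rather than defaulting to $(\top,0)$, while for item~\ref{prop:BB2} the universal quantification over initial states requires the cycling argument to be uniform in the valuation profile. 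I would address this either by assuming valuations and $\increment$ are commensurable so that the relevant bids lie on the grid, or by showing that both the fixed-point and the cycling behaviour are preserved under the discretisation.
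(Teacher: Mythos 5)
Your proposal takes essentially the same route as the paper's proof sketch: item~\ref{prop:BB0} is obtained from the Edelman et al.\ / Cary et al.\ characterisation of the VCG-outcome bids, and items~\ref{prop:BB1}--\ref{prop:BB2} are reduced to the synchronous balanced-bidding dynamics of~\cite{Cary2007}, exactly as the paper does (the paper notes that binding every agent to $BB_\ag$ yields simultaneous rebidding, hence the synchronous setting). Your proposal is in fact more detailed than the paper's sketch, and the discretisation issue you flag (the $\increment$-grid on $\Act$ versus the continuous model of~\cite{Cary2007}) is a genuine subtlety that the paper's proof passes over in silence.
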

\begin{proof} [Proof sketch] 
Statement (\ref{prop:BB0}) is derived in \cite{Cary2007} from the results of Edelman \textit{et. al} \cite{edelman2007internet}.   
Notice that when each agent $\ag$ is bound to the natural strategy $\BB_\ag$, they will update their bids simultaneously in every state reachable form $\pos$. Thus, it corresponds to the synchronous setting described by \cite{Cary2007}. 
The proof for Statements (\ref{prop:BB1}) and (\ref{prop:BB2}) are very similar to the one provided in the analysis of the synchronous setting by Cary \textit{et. al} \cite{Cary2007}. 
\end{proof}

\head{Restricted BB}
The restricted balanced bidding strategy  (RBB) \cite{Cary2007}  is a variation of BB in which the agent only targets slots that are no better than her current slot. 
The natural strategy representing RBB for agent $\ag$ is denoted $RBB_{\ag}$ and is constructed as follows.
First, include the guarded actions $(RBB_{\ag,1}(b),b)$
for each action $b\in \Act$. Second, include $(RBB_{\ag,2}(b,\slot),b)$ for each $b\in \Act$ and $1<\slot\leq\qtdSlot$. Finally, the last guarded action is $(\top, 0)$.  
Let $\slot_\ag = min(\qtdSlot, \sum_{\slot' \in \Slot} \slot'\times \allocprop_{\ag, \slot'})$ be the slot assigned to agent $\ag$ or the last slot if there is no such slot. 
Define $RBB_{\ag,1}(b)$ and $RBB_{\ag,2}(b, \slot)$: 
\[RBB_{\ag, 1}(b) \egdef 
b= \frac{\valprop_\ag+\paymentprop_{1}}{2} 
\land  
argmax_{\slot\in \Slot\& \slot \geq \slot_\ag}(
\phiutil_{\ag, \slot} 
)=1 
\] 
\begin{align*}
RBB_{\ag, 2}(b,\slot) \egdef   
\phiutil_{\ag, \slot}
= \theta_{\slot-1} \times (\valprop_\ag - b) 
\\
\land (argmax_{\slot' \in\Slot\& \slot' \geq \slot_\ag}(\phiutil_{\ag, \slot'}
))^{-1} = \slot^{-1}    
\end{align*}

Similar to the results in 
\cite{Cary2007},
we have that if all agents follow the  restricted balanced-bidding strategy, the auction converge to the VCG equilibrium outcome.  RBB always converge:

\begin{proposition}
\label{prop:rbb}
 For any initial state $\pos_\init \in \setpos_\init$,  state $\pos \in \setpos$, and $1< x \leq \qtdAg$, 
 the following holds, where $\profile \valuation = (\val(\pos, \valprop_\ag))_{\ag \in \Ag}$:
\begin{enumerate}
    \item \label{prop:rbb0} If $\semSL{\CGS_{GSP}}{ir}{}{\pos}{\phiVCG(\profile{RBB})  } = 1$, then $\action_{\match(\pos, RBB_{\eta_x})} = b_{\eta_{x}}(\profile \valuation)$ and $\action_{\match(\pos, RBB_{\eta_1})}>b_{\eta_2}(\profile \valuation)$;
    
    \item  \label{prop:rbb1} $\semSL{\CGS_{GSP}}{ir}{}{\pos_\init}{\F\Gg(\phiVCG(\profile{RBB}))} = 1$.
\end{enumerate}
\end{proposition}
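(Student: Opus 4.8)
The plan is to reduce both statements to the synchronous restricted-balanced-bidding analysis of Cary \textit{et al.}~\cite{Cary2007}, exactly as the proof of Proposition~\ref{prop:bb} reduced the balanced-bidding case. The first thing I would establish is that binding every agent $\ag$ to the memoryless natural strategy $RBB_\ag$ makes the unique outcome play of $\wCGS_{GSP}$ coincide, state by state, with the trajectory of the synchronous RBB dynamics. Since the setting is $ir$ (memoryless), $\match(\pos,RBB_\ag)$ depends only on the current state, and I would unfold its definition to check that: (i) the value $\slot_\ag$ computed from the allocation propositions is exactly $\ag$'s current slot (or the last slot, when she holds none); (ii) the guarded conditions $RBB_{\ag,1}$ and $RBB_{\ag,2}$ select the utility-maximising slot \emph{restricted to slots no better than $\slot_\ag$} (i.e.\ slots $\slot\geq\slot_\ag$, since $\theta_1>\ldots>\theta_\qtdSlot$) and return the balanced bid for that target; and hence (iii) all agents update simultaneously, reproducing the synchronous setting of~\cite{Cary2007}.

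With this correspondence in hand, Statement~(\ref{prop:rbb0}) follows the same derivation as Statement~(\ref{prop:BB0}) of Proposition~\ref{prop:bb}: assuming $\semSL{\CGS_{GSP}}{ir}{}{\pos}{\phiVCG(\profile{RBB})}=1$ forces the next-state allocation to rank agents by descending valuation and the payments to equal the VCG payments; the equilibrium-bid characterisation of Cary \textit{et al.} (itself built on Edelman \textit{et al.}~\cite{edelman2007internet}) then pins the bids of the non-top agents to $b_{\eta_x}(\profile\valuation)$ and leaves $\eta_1$ free to bid anything above $b_{\eta_2}(\profile\valuation)$. Matching these fixed-point bids against the bids actually selected by $\match(\pos,RBB_{\eta_x})$ and $\match(\pos,RBB_{\eta_1})$ closes the first part.

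For Statement~(\ref{prop:rbb1}) I would invoke the convergence theorem for RBB in~\cite{Cary2007}, which shows that the synchronous restricted dynamics always reach the fixed point in finitely many rounds, \emph{independently of the number of slots} (this is precisely where RBB improves on BB, cf.\ Statements~(\ref{prop:BB1})--(\ref{prop:BB2})). By the correspondence, there is then a finite index $t$ after which every state on the induced play from $\pos_\init$ is at the VCG outcome, so $\phiVCG(\profile{RBB})$ has value $1$ from $t$ onward. Reading the quantitative semantics of $\F\Gg$ as the supremum over $i$ of the infimum of the values from position $i$, this yields value $1$ at $\pos_\init$, giving convergence.

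The delicate step is the correspondence of the first paragraph, specifically verifying that the restricted $argmax$ over $\{\slot : \slot\geq\slot_\ag\}$ is faithfully encoded and that ties in the $argmax$ (or between candidate bids) do not cause $\match$ to fire an unintended earlier guarded action; here it matters that the ``$b=\ldots$'' clause in each condition pins a unique bid and that $\match$ selects the smallest matching index. I would also confirm that, because valuations are distinct and fixed after the initial state, $\slot_\ag$ correctly tracks each agent's current slot throughout the play, so that the ``no better than current'' restriction in $\wCGS_{GSP}$ matches exactly the restriction assumed by Cary \textit{et al.} The translation from finite-step convergence into the $\F\Gg$ value is then routine.
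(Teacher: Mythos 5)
Your proposal takes essentially the same route as the paper: both parts are reduced to the synchronous RBB analysis of Cary \emph{et al.}~\cite{Cary2007} (with the equilibrium-bid characterisation from Edelman \emph{et al.}~\cite{edelman2007internet} for Statement~(\ref{prop:rbb0})), via the observation that binding every agent to $RBB_\ag$ reproduces the synchronous update dynamics. The only difference is one of emphasis --- you elaborate the correspondence between the natural-strategy encoding and the synchronous dynamics and cite the convergence theorem as a black box, whereas the paper's sketch recapitulates the internal structure of that convergence argument (the inductive ``maximal stable set'' stages) --- but this is the same proof.
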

\begin{proof} [Proof sketch] 
Statement (\ref{prop:rbb0}) is a derivation from the results presented in \cite{edelman2007internet}.
For Statement (\ref{prop:rbb1}), the proof is similar to the one provided in \cite{Cary2007}. The proof idea is the following. 
First bound the number of steps until convergence of the price of slot $\qtdSlot$ and the set of players who will not be allocated slots. After this step, no losing player can afford a slot and their bids do not interfere with the convergence of the top $\qtdSlot$ agents. 
The second stage of the proof is to show that the allocation of the top $\qtdSlot$ players converges to a fixed point (in which they are sorted by their valuations). Then, for $1\leq i \leq \qtdSlot$, the proof inductively considers the allocation of slots $[i + 1, \qtdSlot]$. A subset of slots is called stable if the allocation is in order of decreasing values and if agent $\eta_j$ is the player currently allocated slot $j$, then her last bid is in accordance with $b_{\eta_j}(\profile \valprop)$ for every $j \in [i+1, \qtdSlot]$. While the current setting is not a fixed point of RBB, the proof proceeds by characterizing the number of rounds taken for increasing the size of the maximal stable set. 
\end{proof}

\head{Knowledge grounded RBB} 
The knowledge grounded RBB strategy (KBB) is a variation of RBB in which the agent uses her knowledge about the valuation of the player currently at her target slot to ground her bid value. The idea is to avoid bidding  more than what she knows her opponent valuates the slot. 
The natural strategy representing KBB for agent $\ag$ is denoted $KBB_\ag$ is constructed in three steps. First, include the guarded actions $(KBB_{\ag, 1}(b,c, \agb),c)$ for each $b,c\in \Act$ and agent $\agb\neq \ag$. Second, include $(KBB_{\ag, 2}(b,\slot,c, \agb),c)$ for each $b,c\in \Act$, slot $1<\slot \leq \qtdSlot$ and agent $\agb\neq \ag$. Finally, include the guarded actions from $RBB_\ag$. The conditions $KBB_{\ag, 1}(b,c, \agb)$ and $KBB_{\ag, 2}(b,\slot,c, \agb)$  are defined as follows: 
\begin{align*}
KBB_{\ag, 1}(b,c, \agb) \egdef \Ka\big(RBB_{\ag, 1}(b) \land \allocprop_{\agb, 1} = 1 \land c = min(\valprop_{\agb}, b) \big)
\end{align*} 
\begin{align*}
KBB_{\ag, 2}(b,\slot,c, \agb) \egdef \Ka\big(RBB_{\ag, 2}(b, \slot) \land \allocprop_{\agb, \slot} = 1 \land c = min(\valprop_{\agb}, b) \big)
\end{align*}
  
The prices 
under KBB are at most the same as under RBB:

\begin{proposition}
\label{prop:kbb}
For any state $\pos \in \setpos$, slot $\slot \in \Slot$ and agent $\ag \in \Ag$,
$\semSL{\CGS_{GSP}}{ir}{}{\pos}{
(\Ag,\profile{KBB}) \paymentprop_\slot \leq (\Ag,\profile{RBB}) \paymentprop_\slot
} = 1$. 
\end{proposition}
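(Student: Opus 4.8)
I read the statement with the next-operator that appears to be elided, i.e. as $(\Ag,\profile{KBB})\X\paymentprop_\slot \pref (\Ag,\profile{RBB})\X\paymentprop_\slot$, since without a temporal operator a bound propositional atom reduces to $\val(\pos,\paymentprop_\slot)$ and the statement becomes vacuous. The plan is to reduce the price comparison in the successor state to a \emph{per-agent bid} comparison, and then lift it through the GSP payment rule. Concretely, the first and central step is to prove the invariant that, for every agent $\ag\in\Ag$ and every state $\pos$, the action selected by $KBB_\ag$ never exceeds the one selected by $RBB_\ag$, that is $\action_{\match(\pos,KBB_\ag)} \leq \action_{\match(\pos,RBB_\ag)}$.

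To establish the invariant I would exploit how $KBB_\ag$ is assembled: its $KBB_{\ag,1}$ and $KBB_{\ag,2}$ guarded actions are prepended to an \emph{identical} copy of the $RBB_\ag$ guards. I split on where $\match(\pos,KBB_\ag)$ lands. If it lands on a prepended guard, the triggering condition has the form $\Ka\big(RBB_{\ag,i}(b)\land\allocprop_{\agb,\slot}=1\land c=min(\valprop_\agb,b)\big)$ and the emitted action is $c=min(\valprop_\agb,b)\leq b$. Here $b$ is uniquely pinned down by $RBB_{\ag,i}(b)$ (namely $b=(\valprop_\ag+\paymentprop_{1})/2$ for $i=1$, or the value making $\phiutil_{\ag,\slot}=\theta_{\slot-1}\times(\valprop_\ag-b)$ for $i=2$), and since the $RBB_{\ag,1}$ guards precede the $RBB_{\ag,2}$ ones inside $RBB_\ag$, this same $b$ is exactly $\action_{\match(\pos,RBB_\ag)}$; hence $\action_{\match(\pos,KBB_\ag)}=c\leq b=\action_{\match(\pos,RBB_\ag)}$. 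If instead no prepended guard matches, the match necessarily falls through to the appended $RBB_\ag$ guards, both strategies emit the same action, and the inequality holds with equality. The delicate point is the epistemic wrapper: a prepended guard can evaluate to $1$ only when $\Ka$ forces $c=min(\valprop_\agb,b)$ in \emph{every} $\obsrel$-indistinguishable state, i.e. only when $\ag$ actually knows $\agb$'s valuation (or knows it dominates $b$); whenever $\ag$ is uncertain about $\valprop_\agb$ across its indistinguishability class, the guard fails and we land in the fall-through case. Either way the bid can only be suppressed, never enlarged, which is exactly the invariant.

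With the invariant in hand I would lift to prices. By the transition function, the price of slot $\slot$ in the successor state equals the bid of the agent occupying rank $\slot+1$, hence the $(\slot+1)$-th largest bid value of the profile (and $0$ when $\slot+1>\qtdAg$); note that the tie-breaking order $\prec$ only decides \emph{which} agent sits in a position, not the bid value recorded as the price, so the price is a pure order statistic. Writing $b^{KBB}_\ag=\action_{\match(\pos,KBB_\ag)}$ and $b^{RBB}_\ag=\action_{\match(\pos,RBB_\ag)}$, the invariant gives $b^{KBB}_\ag\leq b^{RBB}_\ag$ for all $\ag$. I would then invoke monotonicity of order statistics by a threshold argument: for any $t$, the set of agents bidding at least $t$ under the $KBB$ profile is contained in the one under the $RBB$ profile, so for each rank $j$ the $j$-th largest $KBB$-bid is at most the $j$-th largest $RBB$-bid; taking $j=\slot+1$ yields that the $KBB$-successor price of $\slot$ is at most the $RBB$-successor price of $\slot$ (and both are $0$ in the degenerate case), whence the satisfaction value of $\cdot\pref\cdot$ is $1$. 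The main obstacle is not the order-statistic step, which is routine once prices are recognised as order statistics, but the careful bookkeeping around the epistemic guards: showing that the $b$ referenced inside a matched $\Ka$-guard of $KBB$ genuinely coincides with the action $RBB$ would emit, and that uncertainty about opponents' valuations can only lower the chosen bid.
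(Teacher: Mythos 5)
Your proof is correct and matches the paper's approach: the paper's entire proof is the one-liner ``Consequence from the construction of $\profile{KBB}$,'' and your argument is precisely the honest unpacking of that remark --- the key invariant $\action_{\match(\pos,KBB_\ag)}\leq\action_{\match(\pos,RBB_\ag)}$ (since a matched prepended guard emits $c=min(\valprop_\agb,b)\leq b$ where $b$ is exactly what $RBB_\ag$ would play, and otherwise $KBB_\ag$ falls through to the $RBB_\ag$ guards), followed by monotonicity of the GSP price as an order statistic of the bid profile. Your reading of the elided $\X$ and your care with the epistemic guard $\Ka$ are both appropriate and add detail the paper omits.
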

\begin{proof}
Consequence from the construction of $\profile{KBB}$.
\end{proof}

\begin{remark}
With natural strategies, we can easily construct an strategy in which agent $\eta_x$  plays according to $b_{\eta_x}(\profile \valuation)$ (for $1<x\leq\qtdAg$) and agent $\eta_1$  bids $b_{\eta_2}+\mathsf{inc}$ when she knows others' valuations. 
\end{remark}

\head{BB with recall}
Since BB may not converge to the VCG equilibrium outcome due to loops on the slot  allocation and prices, we construct a strategy that behaves according to BB while there is no repetition in the outcome and follows RBB otherwise.  Hereafter, we show that this strategy with recall prevents the loops that hinder the convergence of BB. 
Define the set of weighted conditions $\Psi = \{\bigwedge_{\slot \in \Slot}(\paymentprop_\slot = pr_\slot \land \bigwedge_{\ag\in \Ag} \allocprop_{\ag, \slot} = al_{\ag,\slot}) : pr_\slot \in \Act$ \& $ al_{\ag,\slot} \in \{0,1\}\}$. 
The natural strategy representing  balanced bidding with recall for agent $\ag$ is denoted $BBR_\ag$ and is constructed as follows. 
First, include the guarded actions  $(BBR_{\ag,1}(\psi,  b),b)$
for each action $b\in \Act$ and condition $\psi \in \Psi$. Second, include $(BBR_{\ag,2}(\psi,b,\slot),b)$ for each $\psi \in \Psi$, $b\in \Act$ and $1<\slot\leq\qtdSlot$. Third, include 
$(BBR_{\ag,3}(\psi,  b),b)$
for each action $b\in \Act$. Fourth, include $(BBR_{\ag,4}(b,\slot),b)$ for each $b\in \Act$ and $1<\slot\leq\qtdSlot$. Finally, the last guarded action is $(\top\iteration, 0)$. 

Now we define each guarded condition in $BBR_{\ag}$. 
If the current allocation and payments have already happen in the past, $\ag$ plays according to the restricted bidding strategy: 
\[BBR_{\ag,1}(\psi,  b) \egdef \top\iteration\concat\psi\concat\top\iteration\concat(\psi \land RBB_{\ag,1}(b))\]
\[BBR_{\ag,2}(\psi,  b, \slot) \egdef \top\iteration\concat\psi\concat\top\iteration\concat(\psi \land RBB_{\ag,2}(b, \slot))\]

If there was no repetition on the payments and slot allocation, she plays according to the balanced bidding strategy:
\[BBR_{\ag,3}(\psi,  b) \egdef \top\iteration \concat BB_{\ag,1}(b)\]
\[BBR_{\ag,4}(\psi,  b, \slot) \egdef \top\iteration \concat BB_{\ag,2}(b, \slot)\]

When all agents follow the strategy profile $\profile{BBR} = (BBR_\ag)_{\ag\in \Ag}$, the game converges to the VCG equilibrium outcome. 
\begin{proposition}
\label{prop:bbr} 
 For any initial state $\pos_\init \in \setpos_\init$,  state $\pos \in \setpos$, and $1< x \leq \qtdAg$,  
 the following holds, where $\profile \valuation = (\val(\pos, \valprop_\ag))_{\ag \in \Ag}$:

\begin{enumerate}
    \item \label{prop:bbr-pt1} If $\semSL{\CGS_{GSP}}{iR}{}{\pos}{\phiVCG(\profile{BBR})  } = 1$, then  $\action_{\match(\pos, BBR_{\eta_x})} = b_{\eta_{x}}(\profile \valuation)$ and $\action_{\match(\pos, BBR_{\eta_1})}>b_{\eta_2}(\profile \valuation)$;
    
    \item  \label{prop:bbr-pt2} $\semSL{\CGS_{GSP}}{iR}{}{\pos_\init}{\F\Gg(\phiVCG(\profile{BBR})) } = 1$.
 \end{enumerate}
\end{proposition}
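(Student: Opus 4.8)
The plan is to treat the two statements separately, reducing each to the results already established for $\profile{BB}$ (Proposition~\ref{prop:bb}) and for $\profile{RBB}$ (Proposition~\ref{prop:rbb}). The pivotal observation, used throughout, is that whether an agent $\ag$ fires one of its $RBB$-guards ($BBR_{\ag,1}$ or $BBR_{\ag,2}$) or one of its $BB$-guards ($BBR_{\ag,3}$ or $BBR_{\ag,4}$) at a history $\history$ is governed \emph{solely} by whether the outcome $\psi\in\Psi$ (i.e.\ the allocation--price pair realised in $\lasth(\history)$, which is observable and hence the same for all agents) has already occurred earlier in $\history$: a guard $\top\iteration\concat\psi\concat\top\iteration\concat(\psi\wedge RBB_{\ag,i}(\cdot))$ admits a consistent split exactly when the current outcome equals some strictly earlier one. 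Since the $RBB$-guards precede the $BB$-guards in $BBR_\ag$, $\match$ selects an $RBB$-guard whenever the current outcome is a repeat and a $BB$-guard otherwise; and because these guards embed $RBB_{\ag,i}$, resp.\ $BB_{\ag,i}$, verbatim, the action chosen is precisely the corresponding $RBB$-, resp.\ $BB$-action. Consequently, at every step all agents simultaneously follow $\profile{RBB}$ or all follow $\profile{BB}$, and $\profile{BBR}$ coincides with that profile step-by-step.

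For Statement~(\ref{prop:bbr-pt1}), I would fix a history ending in $\pos$ and split on the above dichotomy. If the outcome of $\pos$ is novel, the bids prescribed by $\profile{BBR}$ at $\pos$ equal those of $\profile{BB}$; as $\phiVCG$ constrains only this single transition, $\semSL{\CGS_{GSP}}{iR}{}{\pos}{\phiVCG(\profile{BBR})}=1$ forces the $\profile{BB}$-bids at $\pos$ to yield the VCG outcome, so $\semSL{\CGS_{GSP}}{ir}{}{\pos}{\phiVCG(\profile{BB})}=1$ and Proposition~\ref{prop:bb}(\ref{prop:BB0}) gives $\action_{\match(\pos,BB_{\eta_x})}=b_{\eta_x}(\profile \valuation)$ and $\action_{\match(\pos,BB_{\eta_1})}>b_{\eta_2}(\profile \valuation)$; since $\action_{\match(\pos,BBR_{\eta_x})}=\action_{\match(\pos,BB_{\eta_x})}$ here, the claim follows. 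If the outcome repeats, the identical argument with $\profile{RBB}$ and Proposition~\ref{prop:rbb}(\ref{prop:rbb0}) yields the same conclusion, so the statement holds in both cases.

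For Statement~(\ref{prop:bbr-pt2}), I would use that $\profile{BBR}$ is a deterministic profile over a finite structure, so from $\pos_\init$ there is a unique play $\pos_0\pos_1\cdots$ with outcome sequence $\psi_0\psi_1\cdots\in\Psi^\omega$. As $\Psi$ is finite, the nested family of ``seen-so-far'' sets $S_i=\{\psi_0,\dots,\psi_{i-1}\}$ stabilises: there is $q$ with $S_i=S_q$ for all $i\ge q$, which forces $\psi_i\in S_i$ for every $i\ge q$, i.e.\ from step $q$ on every realised outcome is a repeat. By the pivotal observation, all agents therefore follow their $RBB$-actions from $\pos_q$ onward, so the suffix of the play from $\pos_q$ is exactly the play generated by $\profile{RBB}$ starting in $\pos_q$ (recall that valuations are fixed after the initial state, so the VCG target at $\pos_q$ is unchanged). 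Invoking the convergence of $\profile{RBB}$ from $\pos_q$, this suffix satisfies $\F\Gg\,\phiVCG(\profile{RBB})$, hence also $\F\Gg\,\phiVCG(\profile{BBR})$ since the two profiles agree on the suffix; prepending the finite prefix $\pos_0\cdots\pos_{q-1}$ cannot affect an $\F\Gg$ property, so $\semSL{\CGS_{GSP}}{iR}{}{\pos_\init}{\F\Gg(\phiVCG(\profile{BBR}))}=1$.

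The main obstacle I anticipate is the last step: Proposition~\ref{prop:rbb}(\ref{prop:rbb1}) is stated for initial states $\pos_\init$, whereas $\pos_q$ is in general an arbitrary reachable state (same, fixed valuation profile, but non-trivial allocation and prices). I would therefore need the slightly stronger fact that $\profile{RBB}$ converges to the VCG outcome from \emph{any} state, which is exactly what the synchronous analysis of Cary \emph{et al.}~\cite{Cary2007} establishes for arbitrary starting bid configurations; re-running its stable-set induction from $\pos_q$ (which does not rely on prices being initially $0$) closes the gap. A secondary point needing care is verifying that the $RBB$ catch-all $(\top,0)$ is never the matched guard at a genuine state, so that the $RBB$-branch of $BBR$ -- which only embeds $RBB_{\ag,1}$ and $RBB_{\ag,2}$ -- truly reproduces the $RBB$-action; this holds because the utility-maximising slot restricted to slots $\ge\slot_\ag$ always exists and triggers one of these two guards.
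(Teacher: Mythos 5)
Your proposal is correct and follows the same overall decomposition as the paper: Statement~(1) is reduced to Propositions~\ref{prop:bb}(\ref{prop:BB0}) and~\ref{prop:rbb}(\ref{prop:rbb0}) via the observation that $\profile{BBR}$ coincides pointwise with either $\profile{BB}$ or $\profile{RBB}$ depending on whether the current outcome is a repeat, and Statement~(2) is obtained by showing that the play eventually switches permanently to the $RBB$-branch and then invoking the convergence of $\profile{RBB}$. Where you differ is in the finiteness step of Statement~(2), and your version is actually tighter: the paper argues by pigeonhole that some state repeats ($\iplay_g=\iplay_l$) and then asserts that the game ``proceeds according to RBB'' from index $l$ onward, which as written only guarantees that the outcome at index $l$ is a repeat, not that the outcomes at $l+1,l+2,\dots$ are; your monotone stabilisation of the seen-outcome sets $S_i\subseteq\Psi$ genuinely forces \emph{every} outcome from some index $q$ on to be a repeat, which is what the permanent switch to $RBB$ requires. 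The obstacle you flag at the end is real but is equally present in the paper's own sketch: both arguments apply the convergence of $\profile{RBB}$ starting from a non-initial reachable state, whereas Proposition~\ref{prop:rbb}(\ref{prop:rbb1}) is stated only for $\pos_\init$; your remark that the stable-set induction of Cary \emph{et al.}\ works from an arbitrary bid configuration is exactly the patch needed, and making it explicit improves on the published sketch. Your secondary point about the $RBB$ catch-all guard is likewise a legitimate detail the paper leaves implicit.
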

\begin{proof}[Proof sketch] 
Statement (\ref{prop:bbr-pt1}) follows from Propositions \ref{prop:bb} and \ref{prop:rbb}.
Statement (\ref{prop:bbr-pt2}) is proven by contradiction. Assume the game does not converge to the VCG equilibrium outcome. 
Let $\assign$ be an assignment and $\iplay=\out(\pos_\init,\assign)$ be the play starting in $\pos_\init$ and follows the strategies assigned by $\assign$.
Since $\wCGS_{GSP}$ has finitely many states, there exist two indices $g<l$ such that $\iplay_g=\iplay_l$. Thus, for every $\psi \in \Psi$, $\semSL{\CGS_{GSP}}{iR}{}{\iplay_g}{\psi} = \semSL{\CGS_{GSP}}{iR}{}{\iplay_l}{\psi}$. Then, the game proceeds according to RBB strategy. That is, for any index $j\geq l$ and agent $\ag$, 
$\action_{\match(\iplay_l, BBR_{\ag})} = \action_{\match(\iplay_l, RBB_{\ag})}$. From Proposition \ref{prop:rbb}, it follows that  $\semSL{\CGS_{GSP}}{iR}{\assign}{\pos_\init}{\F\Gg(\phiVCG(\profile{BBR})) } = 1$.
\end{proof}

When other agents are inactive (\ie\ they repeat their last action), if $BBR_\ag$ selects a different bid from the one assigned by $RBB_\ag$, the utility of $\ag$ in the next state is greater under $BBR_\ag$.  
\begin{proposition}
\label{prop:bbrutil}
Let $\setting \in \{ir, iR\}$ and $\pos = \trans(\pos', \profile \act)$, for some state $\pos' \in \setpos$ and action profile $\profile \act = (\act)_{\ag \in \Ag}$. 
Given an agent $\ag \in \Ag$,  let ${\strat^{\setting}_{-\ag}} = (\strat_{\agb}^{\setting})_{\agb \in \Ag_{-\ag}}$ be a $\setting$-strategy profile, where the strategy $\strat_{\agb}^{\setting}$ of agent $\agb$ is such that $\action_{\match(\pos, \strat_{\agb}^{\setting})} = \act_\agb$. 
If  $\action_{\match(\pos, BBR_{\ag})}\neq \action_{\match(\pos, RBB_{\ag})}$, then $\semSL{\CGS_{GSP}}{iR}{}{\pos}{(\Ag_{-\ag}, {\strat^{iR}_{-\ag}})\allowbreak(\ag,\allowbreak BBR_\ag) \X \phiutil_\ag}$ $ > $ $\semSL{\CGS_{GSP}}{ir}{}{\pos}{(\Ag_{-\ag},\allowbreak {\strat^{ir}_{-\ag}})\allowbreak (\ag,\allowbreak RBB_\ag)\allowbreak \X \phiutil_\ag}$.
\end{proposition}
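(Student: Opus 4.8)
The plan is to reduce the two satisfaction values to a one-step comparison of $\ag$'s realised utility against the fixed bids $\act_{-\ag}$, and then to show that the hypothesis forces $BBR_\ag$ to play $BB_\ag$ and to aim at a strictly more profitable slot than $RBB_\ag$. First I would exploit the construction of $BBR_\ag$: its guards $BBR_{\ag,1},BBR_{\ag,2}$ fire only when some allocation/price profile $\psi\in\Psi$ has already occurred earlier on the history reaching $\pos$, and in that case they return precisely the action prescribed by $RBB_\ag$, whereas $BBR_{\ag,3},BBR_{\ag,4}$ return the action of $BB_\ag$. Hence $\action_{\match(\pos,BBR_\ag)}\neq\action_{\match(\pos,RBB_\ag)}$ can only occur when the repetition guards do not match, so $BBR_\ag$ is in ``$BB$-mode'' and $\action_{\match(\pos,BBR_\ag)}=\action_{\match(\pos,BB_\ag)}$. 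Writing $b^{*}=\action_{\match(\pos,BB_\ag)}$ and $b^{**}=\action_{\match(\pos,RBB_\ag)}$, and using $\action_{\match(\pos,\strat^\setting_\agb)}=\act_\agb$, the unique successor selected on each side is $\pos_{b}=\trans(\pos,(\act_{-\ag},b))$ with $b=b^{*}$ on the left and $b=b^{**}$ on the right; since $\phiutil_\ag$ is a state formula, its value at $\pos_b$ depends only on $\pos_b$ and not on $\setting$ or the remaining assignment, so the claim becomes the one-shot comparison $\semSL{\wCGS_{GSP}}{iR}{}{\pos_{b^{*}}}{\phiutil_\ag}>\semSL{\wCGS_{GSP}}{ir}{}{\pos_{b^{**}}}{\phiutil_\ag}$.

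Second, I would identify the targeted slots. By construction $b^{*}$ is the balanced bid for $\slot^{*}=argmax_{\slot\in\Slot}(\phiutil_{\ag,\slot})$ and $b^{**}$ is the balanced bid for $\slot^{**}=argmax_{\slot\in\Slot,\,\slot\geq\slot_\ag}(\phiutil_{\ag,\slot})$, and the balanced bid is a function of the target slot alone. Thus $b^{*}=b^{**}$ whenever $\slot^{*}=\slot^{**}$, so the hypothesis $b^{*}\neq b^{**}$ yields $\slot^{*}\neq\slot^{**}$. Because $\slot^{**}$ maximises $\phiutil_{\ag,\cdot}$ over $\{\slot\geq\slot_\ag\}$ whereas $\slot^{*}$ maximises over all of $\Slot$, a mismatch is possible only when $\slot^{*}<\slot_\ag$, i.e. $BB_\ag$ aims at a strictly more desirable slot than $\ag$'s current one, a slot excluded from the feasible range of $RBB_\ag$, with $\phiutil_{\ag,\slot^{*}}>\phiutil_{\ag,\slot^{**}}$ (fixing the $argmax$ tie-breaking so that equal $\phiutil$-values never yield different targets).

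Third, I would convert this ranking of the current-price utilities $\phiutil$ into a strict gain of the realised next-state utility. Applying the GSP transition of $\wCGS_{GSP}$ to the bid $b^{*}$ against the fixed profile $\act_{-\ag}$, and invoking the residual-supply characterisation of best responses in the Generalized Second Price auction \cite{edelman2007internet,Cary2007}, I would show that $\ag$ lands on slot $\slot^{*}$ in $\pos_{b^{*}}$ at a price for which her realised utility strictly exceeds the one she obtains on $RBB_\ag$'s same-or-lower target in $\pos_{b^{**}}$; this yields $\semSL{\wCGS_{GSP}}{iR}{}{\pos_{b^{*}}}{\phiutil_\ag}>\semSL{\wCGS_{GSP}}{ir}{}{\pos_{b^{**}}}{\phiutil_\ag}$ and hence the proposition.

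The hard part is precisely this last conversion. The difficulty is that $\phiutil_{\ag,\slot}$ is taken at the current prices $\paymentprop_\slot$ of $\pos$, while $\X\phiutil_\ag$ is read off the successor $\pos_{b}$; for a slot strictly above $\ag$'s current one the GSP price she actually faces after moving up need not coincide with $\paymentprop_{\slot^{*}}$, since promoting $\ag$ pushes the incumbents down and can raise the clearing price of the targeted slot. The crux is therefore to verify, from the definition of the balanced bid and the descending order of $\act_{-\ag}$, that $b^{*}$ does realise slot $\slot^{*}$ and that the induced payment still leaves $\ag$'s realised utility strictly above that of $b^{**}$; this is exactly where the stability analysis of \cite{edelman2007internet} (its Lemma~1) and the best-response dynamics of \cite{Cary2007} are needed, and where the $argmax$ tie-breaking convention must be fixed so that a genuine difference in bids always reflects a genuine difference in realised utility.
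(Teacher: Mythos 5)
Your proposal follows essentially the same route as the paper's proof sketch: reduce the claim to a one-step comparison of successor states, observe that a mismatch of actions forces $BBR_\ag$ into BB-mode targeting a slot strictly better than anything in $RBB_\ag$'s restricted $argmax$ range, and conclude that the BB bid yields strictly greater estimated (hence, since the opponents repeat their bids, realised) utility. You are in fact somewhat more explicit than the paper about the two delicate points --- the $argmax$ tie-breaking convention and the passage from utility at current prices to realised next-state utility, which the paper simply asserts (``her utility is the one estimated'') --- so the proposal is at least as rigorous as the published sketch.
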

\begin{proof}[Proof sketch] 
Assume the actions assigned by $BBR_\ag$ and $RBB_\ag$ are different, that is,  $\action_{\match(\pos, BBR_{\ag})}\neq \action_{\match(\pos, RBB_{\ag})}$, 
then it must be the case that $\action_{\match(\pos, BBR_{\ag})} = b$ such that $\semSL{\CGS_{GSP}}{ir}{\assign}{\pos}{
\phiutil_{\ag, \slot} 
= \theta_{\slot-1} \times (\valprop_\ag - b) 
\land (argmax_{\slot' \in \Slot
}
(\phiutil_{\ag, \slot'} 
))^{-1} = \slot^{-1}} =1   
$ for some slot $\slot>1$.  
Notice that $RBB_\ag$ selected the action $\action_{\match(\pos, RBB_{\ag})}$ that maximizes the utility among slots that are better or equal to $\ag$'s current slot. By the other hand, the condition followed by $BBR_\ag$ in $\pos$ chose the action $b$ that maximizes among all slots. 
Thus, since the actions are different, $b$ has the greatest estimated utility assuming the others repeat their previous bids. As it is in fact the case, by the definition of $\strat_{-\ag}^{iR}$ and $\strat_{-\ag}^{ir}$, $\ag$'s is assigned to $\slot$ in the next state and her utility is the one estimated, that is  $\semSL{\CGS_{GSP}}{iR}{\assign}{\pos}{\X\phiutil_\ag}=\semSL{\CGS_{GSP}}{iR}{\assign}{\pos}{\phiutil_{\ag, \slot}}$. Thus,   $\semSL{\CGS_{GSP}}{iR}{}{\pos}{(\Ag_{-\ag}, \strat_{-\ag}^{iR})(\ag,\allowbreak BBR_\ag) \X \phiutil_\ag}$ $ > $ $\semSL{\CGS_{GSP}}{ir}{\assign}{\pos}{(\Ag_{-\ag}, \strat_{-\ag}^{ir})\allowbreak(\ag, RBB_\ag) \X \phiutil_\ag}$. 
\end{proof}

\begin{remark}
In vindictive bidding  
\cite{Zhou2007}, the agent bids as high as possible to raise the payment of the advisor in the slot right below hers. Since there is the risk that a change in other agents’ bids could result in paying a higher price than expected, the player could use memory to balance the use of aggressive bids. 
\end{remark}



\newcommand{\denotation}[2][]{{\llbracket #2\rrbracket}^{#1}}
\newcommand{\set}[1]{\{#1\}}
\newcommand{\lexpr}{\preceq_e}
\newcommand{\ldist}{\preceq_d}
\newcommand{\Lsys}{\mathcal{L}}
\newcommand{\prop}[1]{\ensuremath{\mathsf{{#1}}}}
\newcommand{\onlabel}[1]{\colorbox{white}{{{#1}}}}

\section{Expressivity}\label{sec:expressivity}

In relation to \SLF with  combinatorial strategies, \NatSLF\ introduces a new, broader class of human-friendly strategies \emph{and} a language for expressing properties of agents that use such strategies.
Clearly, strategies with quantitative conditions can be used to obtain goals that would not be achievable otherwise.
On the other hand, bounded natural strategies of \NatSLF\ may not achieve some goals that can be enforced with combinatorial strategies of \SLF.
In this section, we show that the expressive power of \NatSLF\ is incomparable to that of \SLF.
In other words, there are properties of quantitative games with natural strategies that cannot be equivalently translated to properties based on combinatorial strategies, and vice versa.
From this, we conclude that reasoning about human-friendly strategies offers an inherently different view of a multi-agent system from the ``standard'' one.

\subsection{Expressive and Distinguishing Power}

We first adapt the notions of distinguishing power and expressive power to the quantitative case as follows\footnote{
  Cf.,~e.g.,~\cite{Wang09expressive} for a detailed discussion of standard notions of expressivity. }.

\begin{definition}[Distinguishing power of real-valued logics]\label{def:dist}
Let $\Lsys_1 = (L_1,\denotation{\cdot}_1)$ and $\Lsys_2 = (L_2,\denotation{\cdot}_2)$ be two logical systems with syntax $L_1,L_2$ and real-valued semantics $\denotation{\cdot}_1,\denotation{\cdot}_2$ over the same class of models $\mathcal{M}$.
We say that $\Lsys_2$ is \emph{at least as distinguishing} as $\Lsys_1$ (written: $\Lsys_1\ldist \Lsys_2$) iff for every pair of models $M,M'\in\mathcal{M}$, if there exists a formula $\phi_1\in{L}_1$ such that $\denotation[M]{\phi_1}_1 \neq \denotation[M']{\phi_1}_1$, then there is also $\phi_2\in{L}_2$ with $\denotation[M]{\phi_2}_2 \neq \denotation[M']{\phi_2}_2$.
In other words, if there is a formula of $\Lsys_1$ discerning $M$ from $M'$, then there must be also a formula of $\Lsys_2$ doing the same.
\end{definition}

\begin{definition}[Expressive power of real-valued logics]\label{def:expr}
$\Lsys_2$ is \emph{at least as expressive} as $\Lsys_1$ (written: $\Lsys_1\lexpr \Lsys_2$) iff for every $\phi_1\in{L}_1$ there exists $\phi_2\in{L}_2$ such that, for every model $M\in\mathcal{M}$, we have $\denotation[M]{\phi_1}_1 = \denotation[M]{\phi_2}_2$.
In other words, every formula of $\Lsys_1$ has a translation in $\Lsys_2$ that produces exactly the same truth values on models in $\mathcal{M}$.
\end{definition}
It is easy to see that $\Lsys_1\lexpr \Lsys_2$ implies $\Lsys_1\ldist \Lsys_2$.
Thus, by transposition, we also get that $\Lsys_1\not\ldist \Lsys_2$ implies $\Lsys_1\not\lexpr \Lsys_2$.

In the remainder, $\mathcal{M}$ is the class of pointed weighted games, i.e., pairs $(\CGS,\pos)$ where $\CGS$ is a \wCGS and $\pos$ is a state in $\CGS$.

\subsection{Expressivity of \NatSLF\ vs. \SLF}
\NatSLF\  and \SLF\ are based on different notions of strategic ability. The former refers to ``natural'' strategies, represented as mappings from regular expressions over atomic propositions to actions. The latter uses "combinatorial" strategies, represented by mappings from sequences of states to actions. Each natural strategy can be translated to a combinatorial one, but not vice versa. Consequently, \SLF\ can express that a given coalition has a combinatorial strategy to achieve their goal (which is not expressible in \NatSLF). On the other hand, \NatSLF\ allows expressing that a winning natural strategy does not exist (which cannot be captured in \SLF). 
Now we show that \NatSLF\ allows to express properties 
that cannot be captured in \SLF, and vice versa. 

\begin{figure}[t]\centering
\begin{tikzpicture}[>=latex,transform shape,scale=0.9]
 \input{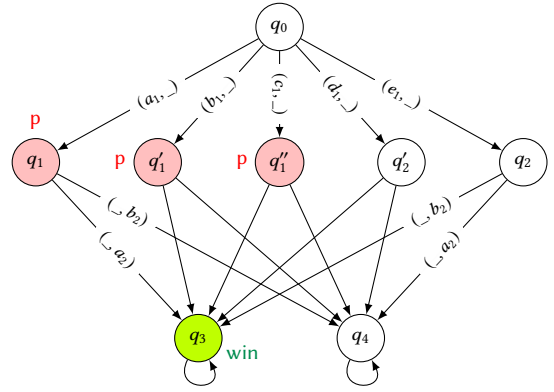}
\end{tikzpicture}
\caption{Model $\CGS_1$. Its counterpart $\CGS_1'$ is obtained by fixing $\prop{p}$ to hold only in $q_1,q_1'$. Underscore fits any action label}
\label{fig:distpower1}
\end{figure}

\begin{proposition}\label{prop:distpower1}
$\NatSLF\not\ldist\SLF$ in both $ir$ and $iR$ semantics.
\end{proposition}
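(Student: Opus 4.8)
The plan is to instantiate Definition~\ref{def:dist} negatively: exhibit the two pointed models $(\CGS_1,q_0)$ and $(\CGS_1',q_0)$ of Figure~\ref{fig:distpower1}, give a single \NatSLF\ sentence that takes different values on them, and then show that \emph{no} \SLF\ sentence separates them. By Definition~\ref{def:dist} this is exactly $\NatSLF\not\ldist\SLF$. The two structures share their entire arena --- actions, states, legality, transition function, initial state and all observation relations --- and differ only in the weight of $p$ at $q_1''$ ($p$ holds there in $\CGS_1$, not in $\CGS_1'$). Writing $\ag_1$ for the agent that moves at $q_0$ (selecting, by its action, which second-level state is entered) and $\ag_2$ for the agent that moves at $q_1,q_1',q_1'',q_2',q_2$, the same pair witnesses both the $ir$ and the $iR$ case, since the obstruction I exploit already manifests within a single transition step and is insensitive to recall.

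For the positive (\NatSLF) side I take the sentence $\Phi \egdef \Estrata{\ag_2}(\ag_2,\var_{\ag_2})\Astrat[\varb](\ag_1,\varb)\X\,\mathsf{win}$, read as: the second-level agent $\ag_2$ has a natural strategy of complexity at most $k$ that enforces $\mathsf{win}$ in the next state against every complexity-$k$ strategy of $\ag_1$. The crux is that both natural and combinatorial strategies are \emph{uniform}, so both induce maps from observation classes to actions; but a natural strategy can realise only those maps that are definable by weighted-epistemic guards, and a guard $\Kb\phi$ evaluated at a state equals $\min_{s\in C}\semEP{s}{\phi}$ where $C$ is that state's observation class, hence is constant on $C$. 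Consequently a natural strategy cannot act differently on two observation classes that realise the same set of propositional valuations, whereas a combinatorial strategy can. The models are tuned so that enforcing $\mathsf{win}$ requires exactly such an action split between two classes: in $\CGS_1$ these classes are separated by a weighted-epistemic condition, so a bounded natural strategy wins and $\semSL{\CGS_1}{\setting}{}{q_0}{\Phi}=1$; in $\CGS_1'$ the relabelling of $q_1''$ makes the two classes realise the same propositional type, so they agree on every formula of $\WE(\Prop)$, every natural strategy is forced to play one action across both, and none wins, whence $\semSL{\CGS_1'}{\setting}{}{q_0}{\Phi}<1$. Verifying this WE-indistinguishability over the full grammar of $\WE(\Prop)$ --- including $\Kb(\neg p)$ and all $f\in\Func$ applied to knowledge atoms --- is the careful part, and reduces to checking that the two classes share the same $\min$- and $\max$-profile of $p$.

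For the negative (\SLF) side I must show that every \SLF\ sentence agrees on the two models. Here combinatorial strategies assign actions to observation classes directly and never read $p$; moreover $q_1''$ is a structural duplicate of $q_1'$ (same legality, same outgoing transitions, same observation class), the two being reached from $q_0$ by distinct moves of $\ag_1$. I would prove, by induction on \SLF\ formulas and over assignments, that $\semSL{\CGS_1}{\setting}{\assign}{\pos}{\psi}=\semSL{\CGS_1'}{\setting}{\assign}{\pos}{\psi}$ for every subformula $\psi$ and every state, the only interesting case being an evaluation that routes a play through $q_1''$ and reads $p$ there. The key invariant is that the changed weight at $q_1''$ is never the unique witness or spoiler of a strategic quantifier at $q_0$: on the $\max$ side of an $\Estrat$, the value obtained by routing through $q_1''$ is dominated by the one through $q_1'$ (which keeps $p$ and has an identical continuation in both models), and on the $\min$ side of an $\Astrat$ it is dominated by routing through $q_2$ or $q_2'$. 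Hence the extremal values at $q_0$ coincide, and $\semSLglobal{\CGS_1}{\setting}{\psi}=\semSLglobal{\CGS_1'}{\setting}{\psi}$ for every \SLF\ sentence $\psi$.

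The main obstacle is this negative side: the statement quantifies over \emph{all} \SLF\ sentences and so cannot be discharged formula by formula; it must be reduced to a structural invariance --- essentially a bisimulation that is blind to the relabelling of $p$ at $q_1''$ precisely because combinatorial strategies are. The delicate point is that \SLF\ may itself mention $p$, so one cannot simply ignore the changed weight; the argument must show that any attempt to read $p$ at $q_1''$ inside a strategic/temporal context is, in both the $\exists$ and the $\forall$ direction, dominated by a behaviourally identical state whose $p$-weight is unchanged, and that this domination survives the quantitative $\max/\min$ aggregation and both the $ir$ and $iR$ readings. Making this invariant propagate through nested quantifiers and through the $\Uu$ operator is where the real work lies.
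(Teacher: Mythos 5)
Your construction coincides with the paper's: the same pair of pointed models $(\CGS_1,q_0)$, $(\CGS_1',q_0)$ from Figure~\ref{fig:distpower1}, the same shape of separating \NatSLF\ sentence (agent $2$ has a bounded natural strategy enforcing $\prop{win}$ against every choice of agent $1$), and the same mechanism for the separation: in $\CGS_1$ the guard $\prop{p}$ splits the ``win-via-$a_2$'' states $q_1,q_1',q_1''$ from the ``win-via-$b_2$'' states $q_2,q_2'$, so the two-rule strategy $\big((\top^{*}\prop{p},a_2),(\top^{*},b_2)\big)$ wins, whereas in $\CGS_1'$ the states $q_1''$ and $q_2,q_2'$ carry identical valuations, hence identical values for every $\WE(\Prop)$ guard, so no natural strategy can act differently on them. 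Two small repairs: the temporal operator must be $\F\prop{win}$ (or $\X\X\prop{win}$), since $\prop{win}$ is reached two steps from $q_0$, not one; and the indistinguishability of $q_1''$ from $q_2$ is driven by the weight function feeding the guards rather than by observation classes (the models need no nontrivial observation relation for the argument to run).

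The genuine gap is on the \SLF\ side, and your own ``where the real work lies'' remark points at it. Showing that no \SLF\ sentence separates the models cannot be discharged by the per-route domination you describe. First, Definition~\ref{def:dist} requires the two satisfaction values to be \emph{equal}, not merely bounded by one another, and a one-sided inequality does not obviously survive the nested $\max/\min$ aggregation and arbitrary functions from $\Func$. Second, and more seriously, \SLF\ can bind the \emph{same} strategy variable to an agent inside several different subformulas; whatever replaces the route through $q_1''$ must therefore be chosen \emph{once per assignment}, uniformly across all bindings in which that variable occurs, not once per occurrence of a quantifier or temporal operator. The paper's sketch is organized precisely around this point: it sets up a correspondence at the level of whole strategies ($c_1$ in $\CGS_1$ plays the role of $a_1$ in $\CGS_1'$; agent $2$'s $a_2a_2a_2b_2b_2$ corresponds to $a_2a_2b_2b_2b_2$), so that entire assignments---and hence all simultaneous bindings---are transported between the two models at once. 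Your invariant would need to be reformulated as such an assignment-level correspondence, and then checked specifically against sentences that share one agent-$2$ variable across several agent-$1$ bindings (which is exactly where local, occurrence-by-occurrence domination is weakest), before the negative half of the proposition is established.
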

\begin{proof}[Proof sketch]
Consider model $\CGS_1$ in Figure~\ref{fig:distpower1}, with agents $\Ag=\set{1,2}$, actions $\Act_1=\set{a_1,b_1,c_1,d_1,e_1}$ and $\Act_2=\set{a_2,b_2}$ available at all positions, and propositions $\APf=\set{\prop{p},\prop{win}}$.
Both propositions are qualitative (that is, the propositions have only values in \{-1,1\}).
For each proposition, the states where it evaluates to $1$ are indicated; otherwise its truth value is assumed to be $-1$.
The outgoing transitions in $q_1',q_1''$ (resp. $q_2'$) are exact copies of those at $q_1$ (resp. $q_2$).
Moreover, model $\CGS_1'$ is obtained by fixing proposition $\prop{left}$ to hold in both $q_1,q_1'$, instead of only $q_1$.
As all the propositions are qualitative, formulas of \NatSLF\ and \SLF\ evaluate to $-1$ or $1$.
Note also that the sets of $ir$ and $iR$ strategies in each model coincide, so we can concentrate on the $ir$ case w.l.o.g.

Let $\CGS\!\dagger\!\sigma$ denote the model obtained by fixing the (memoryless) strategy $\sigma$ in $\CGS$.
In order to prove that $(\CGS_1,q_0)$ and $(\CGS_1',q_0)$ satisfy the same formulas of \SLF, it suffices to observe that:
\begin{enumerate}
\item For every strategy $\sigma_1$ of agent $1$ in $\CGS_1$, there is $\sigma_1'$ in $\CGS_1'$ such that agent $2$ has the same strategic abilities in $(\CGS'_1\!\dagger\!\sigma_1,q_0)$ and $(\CGS_1'\!\dagger\!\sigma_1,q_0)$ (and vice versa).
For instance, playing $c_1$ in $\CGS$ obtains the same abilities of $1$ as playing $a_1$ in $\CGS'$.
\item Analogously for strategies of agent $2$, e.g., strategy $a_2a_2a_2b_2b_2$ in $\CGS_1$ can be simulated by strategy $a_2a_2b_2b_2b_2$ in $\CGS_1'$.
\end{enumerate}

On the other hand, the formula $\Enatstr{2}{2}\Anatstr{1}{1}(1,s_1)(2,s_2) \F\prop{win}$ of \NatSLF\ holds in $(\CGS_1,q_0)$, but not in $(\CGS_1',q_0)$.
The winning natural strategy for agent $2$ in $\CGS_1$ is $\big( (\top^*\prop{p},a_2),\ (\top^*,b_2)\big)$; clearly, it does not succeed in $\CGS_1'$.
\end{proof}

\begin{figure}[t]\centering
\begin{tikzpicture}[>=latex,scale=1]
 \input{NatSLF/Models/model-robber-perf.tex}
\end{tikzpicture}
  \qquad\qquad
\begin{tikzpicture}[>=latex,scale=1]
 \input{NatSLF/Models/model-robber-imperf.tex}
\end{tikzpicture}
\caption{Models $\CGS_2$ (left) and $\CGS_2'$ (right)}
\label{fig:distpower2}
\end{figure}

\begin{proposition}\label{prop:distpower2} 
$\SLF\not\ldist\NatSLF$ in both $ir$ and $iR$ semantics.
\end{proposition}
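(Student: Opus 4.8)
The plan is to reuse the two pointed models of Figure~\ref{fig:distpower2}, where $\CGS_2'$ differs from $\CGS_2$ only by one additional observation edge merging $q_1$ and $q_2$ for the agent controlling the moves out of those states (agent~$2$, since the transitions from $q_1,q_2$ are driven by $a_2/b_2$). As in Proposition~\ref{prop:distpower1}, all propositions are qualitative, so every formula of both logics evaluates to $-1$ or $1$; and because the relevant plays are bounded, it suffices to treat the memoryless case and then check that perfect recall does not help. I must establish two things: (i)~\SLF\ separates $(\CGS_2,q_0)$ from $(\CGS_2',q_0)$, and (ii)~no \NatSLF\ formula does.

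For (i) I would exhibit the \SLF\ sentence asserting that agent~$2$ has a combinatorial strategy $s_2$ such that, against every strategy $s_1$ of agent~$1$, binding $(1,s_1)(2,s_2)\F\prop{win}$ holds. In $\CGS_2$ the states $q_1,q_2$ are distinguishable for agent~$2$, so the uniform strategy ``play $a_2$ at $q_1$, $b_2$ at $q_2$'' enforces $\prop{win}$ after either move of agent~$1$, and the sentence evaluates to $1$. In $\CGS_2'$ the histories $q_0q_1$ and $q_0q_2$ become indistinguishable for agent~$2$ (this remains true under $iR$, since the merge gives $q_1\sim q_2$), so any uniform strategy fixes a single action at both, and agent~$1$ can always steer into the losing branch; the sentence evaluates to $-1$. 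Hence \SLF\ distinguishes the two pointed models in both $ir$ and $iR$.

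For (ii) the crux is a semantic-invariance lemma: every weighted epistemic formula takes the same value at every state of $\CGS_2$ and of $\CGS_2'$. I would prove this by induction on WE formulas, the only interesting case being $K_i\psi$. Since $q_1$ and $q_2$ carry identical weights for all propositions and WE formulas are non-temporal, the inductive hypothesis gives $\semEP{q_1}{\psi}=\semEP{q_2}{\psi}$, so the $\min$ taken over the merged class $\{q_1,q_2\}$ in $\CGS_2'$ equals the singleton value at $q_1$ (resp.\ $q_2$) in $\CGS_2$; at every other state the equivalence classes are untouched. Because the two structures share the same actions, states and transitions and agree state-by-state on all WE formulas, a history is consistent with a given regular expression in $\CGS_2$ iff it is in $\CGS_2'$. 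Consequently $\match$, the set of available natural strategies, and the outcome plays $\out(\assign,\pos)$ coincide across the two models, and a routine induction on \NatSLF\ sentences shows their satisfaction values are identical at $(\CGS_2,q_0)$ and $(\CGS_2',q_0)$ in both semantics.

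Combining (i) and (ii) yields a pair of models separated by \SLF\ but not by \NatSLF, i.e.\ $\SLF\not\ldist\NatSLF$. I expect the main obstacle to be (ii): unlike (i), it is not enough to produce a single formula, and I must argue a uniform invariance over the whole language. The delicate point is justifying that collapsing two observationally-but-not-propositionally distinct states leaves all epistemic conditions—and hence all natural strategies and their induced plays—unchanged, which is precisely what fails for combinatorial strategies and drives the separation.
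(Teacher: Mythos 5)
Your proposal is correct and follows essentially the same route as the paper: the same pair of models $\CGS_2,\CGS_2'$, the same separating \SLF\ sentence $\Estr{2}\Astr{1}(1,s_1)(2,s_2)\F\prop{win}$, and the same key observation that merging the propositionally identical states $q_1,q_2$ in the observation relation leaves the set of natural strategies (and hence all \NatSLF\ values) unchanged while destroying agent $2$'s uniform combinatorial strategy. Your part~(ii) merely spells out, as an explicit induction on WE formulas and then on \NatSLF\ formulas, the invariance claim that the paper states in one sentence.
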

\begin{proof}[Proof sketch]
Consider models $\CGS_2$ and $\CGS_2'$ in Figure~\ref{fig:distpower2}.
They have isomorphic action/transition structures, the only difference being the indistinguishability of states $q_1,q_2$ in $\CGS_2'$ (but not in $\CGS_2$).
Since the two states have the same valuations of propositions, each natural strategy must specify the same decision in $q_1,q_2$.
Thus, both players have exactly the same available natural strategies in $\CGS_2$ and $\CGS_2'$, and hence $(\CGS_2,q_0)$ and $(\CGS_2',q_0)$ produce the same valuations of \NatSLF\ formulas.

On the other hand, we have that $\Estr{2}\Astr{1}(1,s_1)(2,s_2) \F\prop{win}$ of \SLF\ holds in $(\CGS_2,q_0)$, but not in $(\CGS_2',q_0)$.
\end{proof}

The following is an immediate consequence.

\begin{theorem}\label{prop:distpower}
$\NatSLF$ and $\SLF$ have incomparable distinguishing power over the class of pointed \wCGS (in both $ir$ and $iR$ semantics).
\end{theorem}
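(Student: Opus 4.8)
The plan is to derive the statement directly as a corollary of Propositions~\ref{prop:distpower1} and~\ref{prop:distpower2}, with no new model construction required. First I would unfold the meaning of ``incomparable distinguishing power''. By Definition~\ref{def:dist}, the assertion that $\Lsys_2$ is at least as distinguishing as $\Lsys_1$ is written $\Lsys_1 \ldist \Lsys_2$; two systems therefore have \emph{incomparable} distinguishing power exactly when neither relation holds, i.e.\ when both $\NatSLF \not\ldist \SLF$ and $\SLF \not\ldist \NatSLF$. Thus the theorem is equivalent to the conjunction of these two negative claims, and it suffices to point to a witness for each.

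The first conjunct, $\NatSLF \not\ldist \SLF$, is precisely Proposition~\ref{prop:distpower1}: the pointed models $(\CGS_1,q_0)$ and $(\CGS_1',q_0)$ are separated by a formula of \NatSLF\ (namely $\Enatstr{2}{2}\Anatstr{1}{1}(1,s_1)(2,s_2) \F\prop{win}$) yet agree on the value of every formula of \SLF, witnessing that \SLF\ is not at least as distinguishing as \NatSLF. The second conjunct, $\SLF \not\ldist \NatSLF$, is precisely Proposition~\ref{prop:distpower2}: the models $(\CGS_2,q_0)$ and $(\CGS_2',q_0)$ are separated by the \SLF\ formula $\Estr{2}\Astr{1}(1,s_1)(2,s_2) \F\prop{win}$ but agree on every formula of \NatSLF, since collapsing $q_1,q_2$ into one information set leaves the set of available natural strategies unchanged. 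Invoking the two propositions and conjoining them yields incomparability.

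Since both halves are already established, there is no real obstacle at the level of the theorem itself; the only point requiring care is that the incomparability claim is made for both the $ir$ and $iR$ semantics at once. I would discharge this simply by noting that each witness is argued in both regimes already: in $\CGS_1$ the $ir$ and $iR$ strategy sets coincide, and in $\CGS_2,\CGS_2'$ the set of natural strategies is insensitive to recall, so both propositions hold uniformly across the two settings and the conjunction inherits both. Whatever genuine difficulty there is lives entirely inside the two propositions---specifically, in verifying that no formula of the weaker logic can tell the chosen witness pair apart---and not in this final combination.
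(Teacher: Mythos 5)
Your proposal is correct and matches the paper exactly: the paper states the theorem as an immediate consequence of Propositions~\ref{prop:distpower1} and~\ref{prop:distpower2}, which is precisely the conjunction you describe. Your additional remark about the $ir$/$iR$ uniformity of the witnesses is a sound (and welcome) explicit check of a point the paper leaves implicit.
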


\begin{corollary}\label{prop:exprpower}
$\NatSLF$ and $\SLF$ have incomparable expressive power over the class of pointed \wCGS (in both $ir$ and $iR$ semantics).
\end{corollary}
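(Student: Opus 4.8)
The plan is to derive the corollary directly from Theorem~\ref{prop:distpower} by exploiting the general relationship between expressive and distinguishing power established just after Definition~\ref{def:expr}. Recall that the excerpt already observes that $\Lsys_1\lexpr \Lsys_2$ implies $\Lsys_1\ldist \Lsys_2$, and hence by transposition $\Lsys_1\not\ldist \Lsys_2$ implies $\Lsys_1\not\lexpr \Lsys_2$. The whole argument is to apply this contrapositive twice, once in each direction, using the two non-distinguishability results that feed Theorem~\ref{prop:distpower}.

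Concretely, I would proceed as follows. First, invoke Proposition~\ref{prop:distpower1}, which gives $\NatSLF\not\ldist\SLF$ in both the $ir$ and $iR$ semantics; by the transposition above this immediately yields $\NatSLF\not\lexpr\SLF$. The witness underlying this step is the pair of models $(\CGS_1,q_0)$ and $(\CGS_1',q_0)$ of Figure~\ref{fig:distpower1}: they agree on all \SLF\ formulas yet are separated by the \NatSLF\ sentence $\Enatstr{2}{2}\Anatstr{1}{1}(1,s_1)(2,s_2)\F\prop{win}$, so no \SLF\ formula can reproduce the value of that sentence on both models, ruling out a translation. Second, invoke Proposition~\ref{prop:distpower2}, giving $\SLF\not\ldist\NatSLF$ in both semantics, and again apply the transposition to obtain $\SLF\not\lexpr\NatSLF$; here the separating witness is the pair $(\CGS_2,q_0)$, $(\CGS_2',q_0)$ of Figure~\ref{fig:distpower2}, indistinguishable by \NatSLF\ but separated by $\Estr{2}\Astr{1}(1,s_1)(2,s_2)\F\prop{win}$.

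Having established $\NatSLF\not\lexpr\SLF$ and $\SLF\not\lexpr\NatSLF$ simultaneously, I conclude that neither logic is at least as expressive as the other over the class of pointed \wCGS, which is precisely the incomparability asserted by the corollary, uniformly for $\setting\in\{ir,iR\}$.

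There is essentially no genuine obstacle in this last step, since all the substantive model-theoretic work has already been carried out in Propositions~\ref{prop:distpower1} and~\ref{prop:distpower2}; the only point requiring care is to confirm that the implication $\Lsys_1\lexpr\Lsys_2\Rightarrow\Lsys_1\ldist\Lsys_2$ holds in the \emph{real-valued} setting of Definitions~\ref{def:dist} and~\ref{def:expr}, and not merely in the Boolean case. This follows directly from unfolding the definitions: if every $\phi_1$ has a translation $\phi_2$ with $\denotation[M]{\phi_1}_1=\denotation[M]{\phi_2}_2$ for all $M$, then any $\phi_1$ witnessing $\denotation[M]{\phi_1}_1\neq\denotation[M']{\phi_1}_1$ supplies a $\phi_2$ with $\denotation[M]{\phi_2}_2\neq\denotation[M']{\phi_2}_2$, so $\Lsys_2$ distinguishes whatever $\Lsys_1$ distinguishes. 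With that verification in place, the corollary is immediate.
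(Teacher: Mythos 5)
Your proposal is correct and matches the paper's intended argument exactly: the corollary is obtained from the incomparability of distinguishing power (Propositions~\ref{prop:distpower1} and~\ref{prop:distpower2}, combined in Theorem~\ref{prop:distpower}) via the contrapositive of the observation that $\lexpr$ implies $\ldist$, which indeed holds in the real-valued setting for precisely the reason you give. Nothing further is needed.
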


\balance
\section{Model Checking}
\label{sec:modelchecking}

In this section we show that the model checking problem for \NatSLF\ with imperfect information is no harder than model checking LTL or classic SL with memoryless agents. 
First of all, we define 
the quantitative model-checking problem for \NatSLF.

\begin{definition}
Given $\setting \in \{ir, iR\}$, 
the model-checking problem for \NatSLF\ consists in deciding, for a given sentence $\varphi$, wCGS $\wCGS$, state $\pos \in \setpos$ 
and predicate $P \subseteq [-1; 1]$, whether $\semSL{\wCGS}{\setting}{}{\pos}{\varphi} 
\in P$.
\end{definition}

Now, we have all the ingredients to prove the following result.

\begin{theorem}
	\label{theo-mc-recall}
	Assuming that functions in $\Func$ can be computed in polynomial space, model checking \NatSLF\ with imperfect information, natural strategies with recall, and $k$ as parameter of the problem is \PSPACE-complete.
\end{theorem}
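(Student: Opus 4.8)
The plan is to prove the two directions separately: membership in \PSPACE\ and \PSPACE-hardness. For membership I would give a recursive procedure that follows the quantitative semantics of \NatSLF\ and argue it runs in polynomial space. Each strategy quantifier $\Estrata{\ag}\varphi$ is handled by enumerating, one at a time, the natural strategies $\strat$ with $compl(\strat)\le k$, recursively computing $\semSL{\CGS}{iR}{\assign[\var_\ag\mapsto\strat]}{\pos}{\varphi}$, and keeping the running maximum; each candidate $\strat$ is stored in space polynomial in $k$, and a full assignment over $\Ag$ needs space polynomial in $k\cdot\qtdAg$. Bindings merely update the assignment, and a function node $f(\ldots)$ is evaluated using the hypothesis that functions in $\Func$ are poly-space computable. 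The recursion depth is bounded by the size of $\varphi$, so the bookkeeping outside the temporal part stays polynomial. The real work is the temporal fragment, reached once all free agents are bound, where the semantics is read off the single play $\out(\assign,\pos)$.

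The key step, and the one I expect to be the main obstacle, is to show that despite \emph{perfect recall} this play can be traversed and the values of $\X$ and $\Uu$ formulas computed in polynomial space. Here I would exploit that every condition $\regular$ occurring in a bound strategy satisfies $|\regular|\le k$, hence compiles (via a Thompson-style construction) into an NFA with $O(k)$ states whose transitions are guarded by \WE\ formulas, a transition on $\psi$ being enabled at state $\pos$ iff $\semEP{\pos}{\psi}=1$. Each such test is poly-space computable, including the $\Ka$ modality, which by its semantics ranges over the finitely many states indistinguishable from $\pos$. Tracking, for every condition, the \emph{subset} of NFA states reachable on the history read so far yields a determinized monitor; the global configuration is then a state of $\setpos$ together with one such subset per condition, and is therefore of polynomial size. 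Since a configuration determines each agent's next action (the first matching rule in the sense of $\match$ whose action is legal) and hence its successor configuration, the play is a deterministic, and thus ultimately periodic, walk through configuration space in which every configuration is representable in polynomial space.

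With this in hand, evaluating $\phi_1\Uu\phi_2$ reduces to computing $\sup_{i\ge 0}\min\big(v_2(i),\min_{j<i}v_1(j)\big)$ along an ultimately periodic sequence whose pre-period plus period is bounded by the number $N$ of configurations, where $v_1(i),v_2(i)$ are the (recursively computed) values of $\phi_1,\phi_2$ at the $i$-th state. Although $N$ is exponential, $\log N$ is polynomial, so I would iterate a counter up to $2N$ steps, at each step recomputing the current configuration, the values $v_1,v_2$, the running minimum of $v_1$, and the running maximum of the min-expression. Since $\min_{j<i}v_1(j)$ is non-increasing, the supremum is attained within these $2N$ steps, and $\X$ is handled by a single step. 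No configuration needs to be stored for loop detection, so the temporal evaluation stays in polynomial space; composed with the polynomial recursion depth, this gives \PSPACE\ membership.

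For hardness I would reduce from the validity of true quantified Boolean formulas (TQBF). A block of alternating Boolean quantifiers $\exists x_1\forall x_2\cdots$ is encoded by an alternation of strategy quantifiers $\Estrata{\cdot}$ and $\Astrat$ over one-step choices: in a \wCGS\ where an agent picks between two actions, a complexity-one natural strategy $(\top\iteration,a)$ selects a truth value, so each quantifier ranges over exactly the bit choices for one variable. The propositional matrix of the TQBF is then expressed, at the reached state, by a Boolean combination of qualitative propositions using $\Func=\{\top,\lor,\neg\}$ and a single $\X$; by the Remark this fragment of \NatSLF\ is Boolean-valued, and the reduction is polynomial with $k$ constant. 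Because memoryless strategies embed into the $iR$ semantics as a special case, the construction establishes \PSPACE-hardness for natural strategies with recall as well. Combining the two directions yields \PSPACE-completeness, matching the complexity of LTL and of classic \SL\ model checking with memoryless agents.
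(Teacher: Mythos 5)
Your proposal is correct, and its overall architecture --- a recursive evaluation procedure that follows the quantitative semantics, enumerates the bounded natural strategies in place for each quantifier while keeping a running maximum, and bounds the recursion depth by $|\varphi|$ --- is exactly the paper's, which follows \cite{CLMM18}. The two genuine differences lie in the treatment of $\Uu$ and in the lower bound. For $\Uu$, the paper observes that the outcome play is ultimately periodic and evaluates the supremum by a while loop that stores ``the list of states already visited'', asserting a bound of $k\cdot|\setpos|$ on that list; you instead make the periodicity precise by pairing the current state with a determinized monitor (one subset of NFA states per condition of each bound strategy), accept that the resulting configuration count $N$ is exponential, and replace loop detection by a counter running up to $2N$, which needs only $\log N$ (polynomially many) bits. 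Yours is the more robust route: under recall, the action taken at a state depends on how far each regular expression has been matched along the entire history, so the natural bound on the period is the size of the configuration space rather than $k\cdot|\setpos|$, and the counter-based loop sidesteps having to store visited configurations altogether (your accompanying argument that the supremum is attained within preperiod plus two periods, because the running minimum of the $\phi_1$-values is non-increasing, is also sound). For hardness, the paper simply invokes \PSPACE-hardness of quantitative LTL model checking, whose temporal fragment embeds into \NatSLF, whereas you give a direct TQBF reduction via alternation of strategy quantifiers over one-bit choices with complexity-one strategies; both are valid, the paper's being shorter and yours additionally exhibiting that hardness already arises from quantifier alternation with constant $k$ and a trivial temporal part.
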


\begin{proof}
	For the lower-bound we recall that $\LTLF$ model checking is \PSPACE-complete \cite{ABK16}.
	For the upper-bound, to verify that a given \NatSLF\ formula $\varphi$ is satisfied over a \wCGS $\wCGS$ at a state $\pos \in \setpos$ 
	under assignments $\assign$ over uniform natural strategies with recall, we make use of a recursive function as is done in \cite{CLMM18}.
	We start by showing that each recursive call only needs at most polynomial space. 
	First, observe that each assignment $\assign$ has a strategy $s_a$ for each agent $a \in Ag$ \footnote{Note that, as defined in Section \ref{sec:natslf}, we consider only complete assignments. Thus, we can assume that a strategy is assigned for each agent.}. We know that each strategy $s_a$ that can be assigned to agent $a$ is bounded, and we have that $compl(s_a) \leq k$. Thus, each strategy can be stored in $O(k \cdot |Act|)$ and, by consequence, any assignment can be stored in space $O( (|Ag| \cdot |\free(\varphi)|) \cdot (k \cdot |Act|) ) $.
	Now, we can analyse the recursive function. For the base case, $\semSL{\CGS}{\setting}{\assign}{\pos}{p}$ can be computed in constant space via the weight function. For strategy quantification $\semSL{\CGS}{\setting}{\assign}{\pos}{\Estrata{\ag}\varphi}$, besides the recursive call to $\semSL{\CGS}{\setting}{\assign[\var\mapsto\strat]}{\pos}{\varphi}$ we need space $O(|k|\cdot|\Act|)$ to store the current strategy and the current maximum value computed. 
	For $\semSL{\CGS}{\setting}{\assign}{\pos}{f(\varphi_{1}, \ldots, \varphi_{m})}$, by assumption $f$ is computed in polynomial space. 
	For $\semSL{\CGS}{\setting}{\assign}{\pos}{\X\phi}$, we only need to observe that the next state in $\out(\pos,\assign)$ is computed in constant space. 
	Finally, we detail how $\semSL{\CGS}{\setting}{\assign}{\pos}{\phi_1\until\phi_2}$ is computed. Let $\iplay=\out(\pos,\assign)$. Since $\wCGS$ has finitely many states, there exist two indices $g<l$ such that $\iplay_g=\iplay_l$, and since strategies are bounded by $k$, the suffix of $\iplay$ starting at index $l$ is equal to the suffix starting at index $g$. So there exist $\fplay_1=\pos_0\ldots\pos_{g-1}$ and $\fplay_2=\pos_{g}\ldots\pos_{l-1}$ such that $\iplay=\fplay_1\cdot \fplay_2^\omega$. It follows that 
	\begin{align*}
		\semSL{\CGS}{\setting}{\assign}{\pos}{\phi_1\Uu\phi_2} &= \sup_{i \ge 0}     \min 
		\Big
		(\semSL{\CGS}{\setting}{\assign}{\iplay_{i}
		}{\phi_2}, 
		\min_{0\leq j <i} 
		\semSL{\CGS}{\setting}{\assign}{\iplay_{j}
		}{\phi_1}\Big)\\
		&= \max_{0\le i <l}    \min \Big
		(\semSL{\CGS}{\setting}{\assign}{\iplay_{i}
		}{\phi_2}, 
		\min_{0\leq j <i} 
		\semSL{\CGS}{\setting}{\assign}{\iplay_{j}
		}{\phi_1}\Big)
	\end{align*}
	This can be computed by a while loop that increases $i$, computes 
	$\semSL{\CGS}{\setting}{\assign}{\iplay_{i}}{\phi_2}$ and $\min_{0\leq j <i} \semSL{\CGS}{\setting}{\assign}{\iplay_{j}}{\phi_1}$, their minimum, and records the result if it is bigger than the previous maximum. 
	This requires to store the current value of $\min_{0\leq j <i} 
	\semSL{\CGS}{\setting}{\assign}{\iplay_{i}}{\phi_1}$, the current maximum, and the list of states already visited, which are at most $k \cdot |\setpos|$.
	Finally, the number of nested recursive calls is at most $|\phi|$, so the total space needed is bounded by $|\phi|$ times a polynomial in the size of the input, and is thus polynomial.
\end{proof}

Since memoryless natural strategies are a special case of natural strategies with recall, we obtain the following result. 

\begin{corollary}
	\label{theo-mc}
	Assuming that functions in $\Func$ can be computed in polynomial space, model checking \NatSLF\ with imperfect information, memoryless natural strategies, and $k$ as parameter of the problem is \PSPACE-complete.
\end{corollary}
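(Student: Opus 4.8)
The plan is to obtain the corollary from Theorem~\ref{theo-mc-recall} by treating the memoryless setting as a syntactic restriction of the recall setting, and to argue the upper and lower bounds separately. Recall that, as defined in Section~\ref{sec:preliminaries}, a memoryless natural strategy is just a natural strategy in which every guarded condition is a single \WE\ formula, with no regular operators; it is therefore a special case of a strategy with recall.

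For the \PSPACE\ upper bound, I would reuse the recursive procedure from the proof of Theorem~\ref{theo-mc-recall} essentially verbatim, the only change being that the strategy quantifier $\Estrata{\ag}$ now enumerates strategies from $\setstrata^{ir}$ rather than $\setstrata^{iR}$. Since $compl(\strat_\ag)\le k$ still bounds the representation of each candidate strategy, every memoryless strategy is storable in $O(k\cdot|\Act|)$ space and the full assignment fits in polynomial space, exactly as before. The space analysis of each clause carries over unchanged and, if anything, simplifies: the $\match$ function now only needs to evaluate a single \WE\ formula at the last state instead of matching a regular expression against the entire history.

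The only clause worth double-checking is the Until operator, since its bounded-while-loop computation relied on the play $\iplay=\out(\pos,\assign)$ being eventually periodic. This property still holds in the memoryless case: as $\wCGS$ has finitely many states and the successor is determined by the fixed strategies in $\assign$, there are indices $g<l$ with $\iplay_g=\iplay_l$, so $\iplay=\fplay_1\cdot\fplay_2^\omega$. In fact, under purely memoryless strategies the successor depends only on the current state, so periodicity is immediate. Hence the supremum in the semantics of $\phi_1\Uu\phi_2$ is attained within the prefix of length $l$, and the same loop computes it in polynomial space.

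For the \PSPACE\ lower bound, I would invoke the identical reduction used in Theorem~\ref{theo-mc-recall}, namely from $\LTLF$ model checking, which is \PSPACE-complete~\cite{ABK16}. The key point is that $\LTLF$ formulas contain no strategy quantifiers, so their embedding into \NatSLF\ is insensitive to whether the semantics ranges over memoryless or recall strategies; the reduction therefore establishes hardness for the $ir$ semantics just as well. Combining the two bounds yields \PSPACE-completeness. I expect no genuine obstacle here: the only subtle point is confirming that restricting the quantifier to $\setstrata^{ir}$ preserves the eventual-periodicity argument behind the Until clause, and this holds \emph{a fortiori} for memoryless strategies.
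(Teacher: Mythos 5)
Your proposal is correct and follows the same route as the paper, which derives the corollary in one line from the observation that memoryless natural strategies are a special case of natural strategies with recall; your additional verification that the upper-bound algorithm and the $\LTLF$ hardness reduction both carry over to the $ir$ semantics simply makes explicit what the paper leaves implicit.
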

\section{Conclusion}
\label{sec:conclusion}

In this work we have introduced Natural Strategy Logic with quantitative semantics and imperfect information (\NatSLF) for reasoning about 
strategic ability in auctions. 
\NatSLF\ provides a tool for mechanism design and offers a new perspective for formal verification and design of novel mechanisms and strategies. We demonstrated the usefulness of our approach by modelling and evaluating strategies for repeated keyword auctions. 

In terms of technical results, we proved that the model checking problem for \NatSLF\ is \PSPACE-complete, that is, no harder than model checking for the much less expressive language of quantitative \LTL\ (\LTLF).
We also showed that \NatSLF\ has incomparable distinguishing and expressive power to \SLF. This means that the characterizations based on simple bounded strategies offer an inherently different view of auctions and mechanism design from characterizations using combinatorial strategies of arbitrary complexity.
Amazingly, this aspect has never been studied for natural strategies, not even for the original proposal of \NatATL~\cite{natStrategy}.

We consider several directions for future work. 
First, a probabilistic extension of Strategy Logic \cite{aminof2019probabilistic} would allow handling  
mechanisms in stochastic settings with mixed strategies.  
Another direction 
is to investigate the use of strategies with recall for learning other players' valuations based on their behaviour. 
Finally, the implementation of a model checker for \NatSLF\ will enable the empirical evaluation of auctions with natural strategies. 

\begin{acks}
This research is supported by the ANR project AGAPE ANR-18-CE23-0013. 
\end{acks}

\newpage

\bibliographystyle{apalike}


\end{document}